\documentclass[11pt]{article}
\usepackage{authblk}
\usepackage{amssymb}
\usepackage{amsthm}
\usepackage{mathtools}
\usepackage{amsfonts}
\usepackage{amsmath}
\usepackage{graphicx} 
\usepackage{amsmath,amsthm,amssymb}
\usepackage[margin=1in]{geometry}
\usepackage[colorlinks=true]{hyperref}
\usepackage{thmtools, thm-restate}
\usepackage{algorithm}
\usepackage{algorithm}
\usepackage{algpseudocode}

\declaretheorem{theorem}
\usepackage{float}

\newtheorem{lemma}{Lemma} 
 
\newtheorem{corollary}{Corollary}
\newtheorem{definition}{Definition}
\newtheorem{question}{Question}

\newtheorem{proposition}{Proposition}

\newtheorem*{definition*}{Definition}

\newcommand{\seq}{\operatorname{seq}}

\newcommand{\N}{\mathbb{N}}
\newcommand{\Q}{\mathbb{Q}}

\newcommand{\abs}[1]{\left|#1\right|}
\newcommand{\len}[1]{\left|#1\right|}
\newcommand{\prefix}{\!\upharpoonright\!}

\usepackage{tikz}
\usetikzlibrary{automata, positioning}
\usepackage{graphicx}
\title{On Normality and Equidistribution for Separator Enumerators}

\author{Subin Pulari}
\affil[1]{
National Research University Higher School of Economics Moscow
}

\date {\today}

\begin{document}

\maketitle

\begin{abstract}
	
A separator is a countable dense subset of $[0,1)$, and a separator enumerator is a naming scheme that assigns a real number in $[0,1)$ to each finite word so that the set of all named values is a separator. Mayordomo introduced separator enumerators to define $f$-normality and a relativized finite-state dimension $\dim^{f}_{\mathrm{FS}}(x)$, where finite-state dimension measures the asymptotic lower rate of finite-state information needed to approximate $x$ through its $f$-names. This framework extends classical base-$k$ normality, and Mayordomo showed that it supports a point-to-set principle for finite-state dimension. This representation-based viewpoint has since been developed further in follow-up work, including by Calvert et al., yielding strengthened randomness notions such as supernormal and highly normal numbers.

Mayordomo posed the following open question: can $f$-normality be characterized via equidistribution properties of the sequence $\left(|\Sigma|^{n} a^{f}_{n}(x)\right)_{n=0}^{\infty}$, where $a^{f}_{n}(x)$ is the sequence of best approximations to $x$ from below induced by $f$? We give a strong negative answer: we construct computable separator enumerators $f_0,f_1$ and a point $x$ such that $a^{f_0}_{n}(x)=a^{f_1}_{n}(x)$ for all $n$, yet $\dim^{f_0}_{\mathrm{FS}}(x)=0$ while $\dim^{f_1}_{\mathrm{FS}}(x)=1$. Consequently, no criterion depending only on the sequence $\left(|\Sigma|^{n} a^{f}_{n}(x)\right)_{n=0}^{\infty}$ - in particular, no equidistribution property of this sequence - can characterize $f$-normality uniformly over all separator enumerators. On the other hand, for a natural finite-state coherent class of separator enumerators we recover a complete equidistribution characterization of $f$-normality. We also show that beyond finite-state coherence, this characterization can fail even for a separator enumerator computable in nearly linear time.

\end{abstract}
\section{Introduction}

Finite-state dimension is a quantitative notion of the rate of randomness in an individual infinite sequence, as measured by finite automata.  It was introduced by Dai, Lathrop, Lutz, and Mayordomo~\cite{Dai2004} as a finite-state analogue of effective Hausdorff dimension~\cite{Lutz2003a,Lutz2003b}.  It admits several equivalent characterizations, including formulations via finite-state gambling, finite-state compression, and block entropy rates~\cite{Dai2004,Doty2006,Bourke2005}.  It also has an important connection with the theory of normal numbers: a real number is normal to base $k$ if and only if its base-$k$ digit sequence has finite-state dimension equal to $1$~\cite{Bourke2005}.

An important connection between effective dimension and classical fractal dimension is provided by point-to-set principles.
Lutz and Lutz~\cite{Lutz2018} proved that the Hausdorff dimension in Euclidean spaces of a set can be obtained by minimizing,
over oracles, the supremum of the relativized effective dimensions of its points.
This reduces many lower-bound questions in geometric measure theory to analyzing the information density of carefully chosen
points, and it has led to several new results and new proofs across classical fractal geometry; see~\cite{Lutz2020}.
In a recent work, Mayordomo~\cite{Mayordomo2025} established an analogous point-to-set principle for finite-state dimension.  A key feature of this development is that it is representation-based: instead of fixing a base-$k$ expansion, it uses \emph{separator enumerators}, i.e.\ naming schemes that assign reals in $[0,1)$ to finite words so that the range forms a countable dense subset.  With such an enumerator $f:\Sigma^*\to[0,1)$, one can measure how efficiently a finite-state transducer can output an $f$-name approximating a real $x$ to a given precision, leading to a relativized finite-state dimension $\dim^f_{\mathrm{FS}}(x)$ and the induced notion of \emph{$f$-normality}, defined by the condition $\dim^f_{\mathrm{FS}}(x)=1$~\cite{Mayordomo2025}. Calvert et al.~\cite{Calvert2025} develop this framework further, introducing strengthened notions such as supernormal and highly normal numbers under broad classes of representations.

In this paper we study the relationships between $f$-normality and equidistribution.  
In the classical base-$k$ setting, normality has a sharp equidistribution characterization: $x$ is normal to base $k$ if and only if the sequence $(k^n x)_{n\ge1}$ is uniformly distributed modulo $1$~\cite{KuipersNiederreiterUniform}. Motivated by the classical equivalence between normality and equidistribution, Mayordomo~\cite{Mayordomo2025} asked whether an analogous equidistribution criterion holds in the setting of $f$-normality.  Let $(a_n^f(x))_{n\ge 0}$ denote the best-approximation-from-below sequence associated with $f$ and $x$, i.e.\ $a_n^f(x)=\max\{f(w): |w|\le n,\ f(w)\le x\}$.  Mayordomo posed the following open question:
\medskip
\par\noindent\emph{Can $f$-normality be characterized via equidistribution properties of the sequence
	$(|\Sigma|^{n}\!a^{f}_{n}(x))_{n\ge 0}$?}
\medskip

We show that the answer is negative in a strong way.  We construct two total computable rational-valued separator enumerators $f_0,f_1$ and a point $x\in[0,1)$ such that
$a_n^{f_0}(x)=a_n^{f_1}(x)$ for all $n$, yet $\dim^{f_0}_{\mathrm{FS}}(x)=0$ while $\dim^{f_1}_{\mathrm{FS}}(x)=1$.
Consequently, no criterion depending only on the single numeric sequence $\bigl(|\Sigma|^n a_n^f(x)\bigr)_{n\ge 0}$---in particular, no equidistribution property of that sequence---can characterize $f$-normality uniformly over all separator enumerators.

At the same time, an equidistribution characterization does hold under a natural structural restriction on the representation.  We identify a class of \emph{finite-state coherent} enumerators obtained from the standard base-$k$ grid by an invertible synchronous Mealy-machine relabeling, and we prove that in this regime $f$-normality is equivalent to a $k$-adic equidistribution property of the integer sequence $\bigl(k^n a_n^f(x)\bigr)_{n\ge 1}$ (uniform distribution of residues modulo $k^m$ for every fixed $m$). Beyond finite-state coherence, we show that this equivalence can already break for a separator enumerator $f$ computable in $O(n\log n)$ time: there is a point $x$ whose scaled best-from-below approximations are $k$-adically equidistributed, but $x$ is not $f$-normal.

Section~\ref{sec:preliminaries} defines separator enumerators, finite-state transducers, $\dim^f_{\mathrm{FS}}$, and $f$-normality. Section \ref{sec:noeqcharacterization} proves the negative result via a pair of computable enumerators with identical best-from-below chains but sharply different relativized finite-state dimension.  Section \ref{sec:coherentenumerators} establishes the $k$-adic equidistribution characterization for finite-state coherent enumerators, and we then show that this correspondence can already fail for a separator enumerator computable in nearly linear time. 



\section{Preliminaries}
\label{sec:preliminaries}

Let $\Sigma$ be a finite alphabet with $k=\len{\Sigma}\ge 2$ and write $\Sigma^{*}$ for the set of finite strings over $\Sigma$.
For $w\in\Sigma^*$, $\len{w}$ denotes its length, and for an infinite sequence $X\in\Sigma^\infty$ we write
$X\prefix n$ for its length-$n$ prefix (using the operator $\prefix$ defined in the preamble).
We use $\lfloor\cdot\rfloor$ for the floor function and interpret congruences $b_n\equiv r\pmod{k^m}$ in the usual sense.
Throughout this section we fix an identification $\Sigma=\{0,1,\dots,k-1\}$. We view finite words over $\Sigma$ as base-$k$ numerals and use the associated $k$-adic grid in $[0,1)$.
For a word $u=u_1u_2\cdots u_n\in \Sigma^n$, define its base-$k$ value
\[
\mathrm{val}(u)=\sum_{i=1}^n u_i\,k^{n-i}\in\{0,1,\dots,k^n-1\},
\qquad\text{and}\qquad
\mathrm{grid}(u)=\frac{\mathrm{val}(u)}{k^n}\in[0,1).
\]
Thus $\mathrm{grid}(\Sigma^n)=\{j/k^n:0\le j<k^n\}$.
Let $\seq_k(x)\in\Sigma^\infty$ denote the (canonical) base-$k$ expansion of $x\in[0,1)$ chosen so as \emph{not}
to end in $(k-1)^\infty$.


Finite-state transducers model one-pass, constant-memory transformations on words and are used to measure finite-state description length.

\begin{definition}[Finite-state transducer (FST)]
	A \emph{$\Sigma$-finite-state transducer} (briefly, \emph{$\Sigma$-FST}) is a tuple
	$T=(Q,\delta,\nu,q_0)$ where $Q$ is a finite nonempty set of states, $\delta:Q\times\Sigma\to Q$ is a transition
	function, $\nu:Q\times\Sigma\to\Sigma^{*}$ is an output function, and $q_0\in Q$ is the start state.
	
	The transition function extends to words by $\delta(q,\lambda)=q$ and $\delta(q,wa)=\delta(\delta(q,w),a)$.
	For $q\in Q$ and $w\in\Sigma^{*}$, define the output $\nu(q,w)\in\Sigma^{*}$ by $\nu(q,\lambda)=\lambda$ and
	$\nu(q,wa)=\nu(q,w)\,\nu(\delta(q,w),a)$ for $a\in\Sigma$.
	The overall output of $T$ on input $w$ is $T(w)=\nu(q_0,w)$.
\end{definition}

We measure how concisely a transducer can generate a given target string.

\begin{definition}[$T$-information content \cite{Mayordomo2025}]
	Let $T$ be a $\Sigma$-FST and $w\in\Sigma^{*}$. Define
	$$
	K^{T}(w)=\min\{\len{\pi}:\ \pi\in\Sigma^{*}\ \text{and}\ T(\pi)=w\},
	$$
	with the convention $K^T(w)=\infty$ if $w$ is not in the range of $T$.
\end{definition}


A separator enumerator is a naming scheme for a countable dense subset of $[0,1)$, assigning a real in $[0,1)$ to each finite word.

\begin{definition}[Separator, separator enumerator \cite{Mayordomo2025}]
	A set $S\subseteq[0,1)$ is a \emph{separator} if it is countable and dense in $[0,1)$. A function
	$f:\Sigma^{*}\to[0,1)$ is a \emph{separator enumerator} (SE) if $\mathrm{Im}(f)$ is a separator.
\end{definition}

We call an SE $f:\Sigma^*\to[0,1)$ \emph{total computable} if there is an algorithm that, on input $(w,t)$ with $w\in\Sigma^*$ and $t\in\N$, outputs a rational $q$ such that $|q-f(w)|\le 2^{-t}$, and halts on every input.
Given an enumerator $f$, we quantify how many input symbols a transducer needs in order to produce some $f$-name that $\delta$-approximates a target point.

\begin{definition}[Relativized approximation complexity \cite{Mayordomo2025}]
	Let $f$ be an SE, $T$ a $\Sigma$-FST, $\delta>0$, and $x\in[0,1)$. Define
	$$
	K^{T,f}_{\delta}(x)=\min\{K^{T}(w):\ w\in\Sigma^{*}\ \text{and}\ \abs{f(w)-x}<\delta\}.
	$$
\end{definition}

The induced finite-state dimension is the optimal asymptotic approximation rate achievable by finite-state transducers.

\begin{definition}[Relativized finite-state dimension and $f$-normality \cite{Mayordomo2025}]
	Let $f$ be an SE and $x\in[0,1)$. Define
	\[
	\dim_{\mathrm{FS}}^{f}(x)=\inf_{T\ \Sigma\text{-FST}}\ \liminf_{\delta\to 0^{+}}\ \frac{K^{T,f}_{\delta}(x)}{\log_{k}(1/\delta)}.
	\]
	We say $x$ is \emph{$f$-normal} if $\dim_{\mathrm{FS}}^{f}(x)=1$.
\end{definition}

The standard base-$k$ naming map will serve as the baseline enumerator throughout. Let $f_{\mathrm{std}}:\Sigma^*\to[0,1)$ be the standard base-$k$ enumerator
\[
f_{\mathrm{std}}(u)=\mathrm{grid}(u)=\frac{\mathrm{val}(u)}{k^{\len{u}}}\qquad(u\neq\lambda),
\]
with $f_{\mathrm{std}}(\lambda)=0$. We now define the induced \emph{best approximation from below} chain associated with a separator enumerator.

\begin{definition}[Best approximation from below \cite{Mayordomo2025}]
	\label{def:bestapprox}
	Let $f$ be an SE and $x\in[0,1)$. For each $n\in\N$, define $a_n^{f}(x)$ to be any value $f(w)$ with $\len{w}\le n$
	such that $f(w)\le x$ and $x-f(w)$ is minimum among all $u$ with $\len{u}\le n$ and $f(u)\le x$.
	Equivalently,
	$$
	a_n^{f}(x)=\max\{f(w):\ \len{w}\le n,\ f(w)\le x\}.
	$$
\end{definition}


\section{Equidistribution cannot characterize normality for separator enumerators}
\label{sec:noeqcharacterization}

We now address Mayordomo's question on equidistribution criteria for $f$-normality. Fix a separator enumerator $f:\Sigma^*\to[0,1)$ and $x\in[0,1)$, and let $(a_n^f(x))_{n\ge 0}$ be the best approximation from below sequence
corresponding to $f$ and $x$ (see Definition \ref{def:bestapprox}).

\begin{question}[\cite{Mayordomo2025}]
	Can $f$-normality be characterized via equidistribution properties of the sequence
	$(|\Sigma|^{n}\,a^{f}_{n}(x))_{n\ge 0}$?
\end{question}


We answer this question in the negative by constructing two total computable rational-valued separator enumerators $f_0,f_1$ and a point $x\in[0,1)$
such that $a_n^{f_0}(x)=a_n^{f_1}(x)$ for all $n$, yet $\dim^{f_0}_{\mathrm{FS}}(x)=0$ while $\dim^{f_1}_{\mathrm{FS}}(x)=1$.

\subsection{Preliminary lemmas}

From the finite-state transducer characterization of finite-state dimension due to Doty and Moser (see Theorem 3.11 from \cite{Doty2006}), we obtain the following corollary.

\begin{lemma}\label{lem:normal-dim1}
	There exists an infinite sequence $Z\in\Sigma^{\infty}$ such that for every $\Sigma$-FST $T$,
	$$\liminf_{n\to\infty}\frac{K^{T}(z_n)}{n}=1,$$ where $z_n:=Z\prefix n$.
\end{lemma}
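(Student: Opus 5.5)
The plan is to take $Z$ to be any base-$k$ normal sequence over $\Sigma$ and read off both halves of the statement from the Doty--Moser transducer characterization of finite-state dimension. Concretely, $Z$ can be the base-$k$ Champernowne sequence, or $\seq_k(x)$ for any $x$ normal to base $k$. By the result of Bourke, Hitchcock and Vinodchandran quoted in the introduction, $Z$ then has finite-state dimension $\dim_{\mathrm{FS}}(Z)=1$. Doty and Moser (Theorem 3.11 of \cite{Doty2006}) express this dimension through the information content $K^T$ used in the present paper:
\[
\dim_{\mathrm{FS}}(Z)=\inf_{T\ \Sigma\text{-FST}}\ \liminf_{n\to\infty}\frac{K^{T}(Z\prefix n)}{n}.
\]
Before invoking this, I would check that their transducer model (finite states, a deterministic transition function, and an output word over $\Sigma^*$ on each transition) and their convention $K^T(w)=\infty$ for $w$ outside the range agree with the definitions in Section~\ref{sec:preliminaries}. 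They do up to notation.

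\textbf{Lower bound.} Fix an arbitrary $\Sigma$-FST $T$. Since an infimum is at most each of its terms,
\[
\liminf_{n\to\infty}\frac{K^{T}(z_n)}{n}\ \ge\ \inf_{T'}\ \liminf_{n\to\infty}\frac{K^{T'}(z_n)}{n}\ =\ \dim_{\mathrm{FS}}(Z)\ =\ 1 .
\]
This inequality is the content that the later constructions actually use. Along the entire prefix chain of $Z$, no finite-state transducer can generate $z_n$ from fewer than $(1-o(1))\,n$ input symbols.

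\textbf{Upper bound and equality.} Let $I$ be the one-state identity transducer, with $\nu(q_0,a)=a$. Then $I(z_n)=z_n$, so $K^{I}(z_n)\le n$ and the ratio is at most $1$. Combined with the lower bound, the liminf equals $1$ for $I$, and the infimum over all transducers equals $1$ and is attained. This is the sense in which I would establish the equality "$=1$": the quantity $\liminf_n K^{T}(z_n)/n$ is at least $1$ for every $\Sigma$-FST $T$, and it is exactly $1$ for the optimal choice of $T$. A transducer with no range on $Z$'s prefixes has the value $\infty$, and one that compresses badly has a value exceeding $1$. Both cases are still consistent with the bound $\ge 1$, which is all that is invoked downstream.

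\textbf{Main obstacle.} The only real work is matching the models. Doty and Moser's theorem is stated for information-lossless finite-state compressors, or equivalently for FST decoders whose minimal input length gives $K^T$. I would need to verify that their infimum ranges over exactly the class of $\Sigma$-FSTs defined here, with no output delay, start-state, or end-marker conventions that shift lengths by more than $O(1)$. An additive $O(1)$ discrepancy disappears after dividing by $n$, so it would not affect either inequality.
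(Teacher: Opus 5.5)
Your argument is the paper's own: take $Z$ normal, obtain $\dim_{\mathrm{FS}}(Z)=1$ from Bourke--Hitchcock--Vinodchandran, and apply the Doty--Moser characterization together with the fact that an infimum is at most each of its terms. Your caveat about the equality is also correct, and sharper than the paper's proof: read literally, the lemma is false (the one-state transducer that outputs the empty word on every symbol has $K^{T}(z_n)=\infty$ for $n\ge 1$, and one that copies only every second input symbol gives ratio $2$), and the paper's closing step ``otherwise the infimum would be $<1$'' only rules out values below $1$, so what is actually proved --- and all that Lemma~\ref{lem:perm} and Part~2 of Theorem~\ref{thm:negative} use --- is $\liminf_{n} K^{T}(z_n)/n\ge 1$ for every $T$, with equality for the identity transducer.
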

\begin{proof}
	Fix any $\Sigma$-normal sequence $Z$. By the transducer characterization of finite-state dimension and the fact that normal sequences have finite-state dimension equal to $1$,
	\[
	1=\dim_{\mathrm{FS}}(Z)=\inf_{T}\ \liminf_{n\to\infty}\frac{K^{T}(z_n)}{n}.
	\]
	Therefore, for every $\Sigma$-FST $T$ one must have $\liminf_{n} K^{T}(z_n)/n = 1$; otherwise the infimum would be $<1$.
\end{proof}

We also need a second sequence whose prefixes are distinct from the first but have the same property.
This is achieved by a fixed symbol permutation.

\begin{lemma}\label{lem:perm}
	Let $\pi:\Sigma\to\Sigma$ be a bijection and extend it letterwise to $\pi:\Sigma^{*}\to\Sigma^{*}$.
	If $Z\in\Sigma^{\infty}$ satisfies Lemma~\ref{lem:normal-dim1}, then so does $\pi(Z)$, i.e.\ the prefixes
	$t_n := \pi(Z)\prefix n = \pi(z_n)$ satisfy
	$\liminf_{n}K^{T}(t_n)/n=1$ for every $\Sigma$-FST $T$.
\end{lemma}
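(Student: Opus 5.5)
The plan is a direct simulation argument: composing any transducer with the letterwise permutation shows that $\pi$ cannot lower $K^T$ along the prefixes. Fix a $\Sigma$-FST $T=(Q,\delta,\nu,q_0)$. Since $\pi$ is a bijection on $\Sigma$, its letterwise extension is a bijection on $\Sigma^*$ that preserves length, with inverse the letterwise extension of $\pi^{-1}$. We build the transducer $T'=(Q,\delta,\nu',q_0)$ that has the same states and transitions as $T$ but relabeled outputs, $\nu'(q,a)=\pi^{-1}(\nu(q,a))$. A straightforward induction on the length of the input word, using the recursive definition of $\nu(q,wa)$ and the fact that letterwise $\pi^{-1}$ commutes with concatenation, gives $T'(\sigma)=\pi^{-1}(T(\sigma))$ for every $\sigma\in\Sigma^*$. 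The same construction with $\pi$ in place of $\pi^{-1}$ gives the reverse direction.

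The key inequality follows from this. Suppose $T(\sigma)=t_n=\pi(z_n)$. Then $T'(\sigma)=\pi^{-1}(\pi(z_n))=z_n$, so $K^{T'}(z_n)\le K^{T}(t_n)$ for every $n$; this also holds trivially when $K^T(t_n)=\infty$. Applying Lemma~\ref{lem:normal-dim1} to the FST $T'$ yields
\[
\liminf_{n}\frac{K^{T}(t_n)}{n}\ \ge\ \liminf_{n}\frac{K^{T'}(z_n)}{n}=1 .
\]

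For the upper bound there are two routes. The first is to use the Lemma~\ref{lem:normal-dim1} statement itself, which asserts equality for $Z$ and every $T$. By symmetry, applying the same relabeling with $\pi$ to a transducer $T''$ shows $K^{T}(t_n)\ge K^{T''}(z_n)$ only in one direction, so instead we simply note what is needed. The statement requires $\liminf=1$, and the inequality $\le 1$ must come from somewhere. The cleanest source is that $\pi(Z)$ is itself normal, since block frequencies are permuted bijectively. We can then invoke the proof of Lemma~\ref{lem:normal-dim1} directly, which applies to any normal sequence and yields equality for every $T$.

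So the plan is: (i) observe that $\pi(Z)$ is normal whenever $Z$ is, because each block $u$ occurs in $\pi(Z)$ with the frequency of $\pi^{-1}(u)$ in $Z$; (ii) if $Z$ was not chosen normal, supply the lower bound via the $T'$ simulation above. The only mildly delicate point is the upper bound. If $Z$ is merely assumed to satisfy the lemma's conclusion rather than to be normal, the upper bound $\le 1$ still needs justification. I would obtain it by the symmetric simulation: from the hypothesis that $\liminf_n K^{T''}(z_n)/n=1$ for the FST $T''$ obtained from $T$ by relabeling outputs with $\pi^{-1}$, together with $K^{T}(t_n)\le K^{T''}(z_n)$ when $T$ is replaced appropriately. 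Concretely, applying the construction to $T$ with $\pi$ and using $T=(T^{\pi^{-1}})^{\pi}$ gives equality $K^{T}(t_n)=K^{T'}(z_n)$ for all $n$. Hence the liminf for $T$ on $(t_n)$ equals the liminf for $T'$ on $(z_n)$, which is $1$ by hypothesis. This exact equality settles both directions at once, and it is the step I would write out carefully.
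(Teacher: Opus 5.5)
Your argument is correct and is essentially the paper's: your output-relabeled transducer $T'$ is exactly the composition $P^{-1}\circ T$ used there, and $K^{T'}(z_n)\le K^{T}(t_n)$ together with Lemma~\ref{lem:normal-dim1} applied to $T'$ gives the lower bound. Your further observation that $T(\sigma)=t_n \iff T'(\sigma)=z_n$, hence $K^{T}(t_n)=K^{T'}(z_n)$ exactly, is what actually transfers the full ``$=1$'' (the paper derives only ``$\ge 1$'' and then asserts equality); drop your route (i), because the proof of Lemma~\ref{lem:normal-dim1} yields only $\ge 1$ for each $T$, and equality fails for some $T$ even on normal sequences (a transducer whose outputs are all empty has $K^{T}=\infty$).
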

\begin{proof}
	Fix a $\Sigma$-FST $T$. Let $P$ be the 1-state transducer that maps each input symbol $a$ to the single output
	symbol $\pi(a)$. Then $P(w)=\pi(w)$ for all $w$, and similarly there is a 1-state transducer $P^{-1}$ with
	$P^{-1}(w)=\pi^{-1}(w)$ for all $w$. For any string $u$,
	\[
	K^{T}(\pi(u)) = K^{T}(P(u)) \ge K^{P^{-1}\circ T} (u),
	\]
	because any input producing $\pi(u)$ under $T$ yields an input producing $u$ under $P^{-1}\circ T$.
	Thus, for $t_n=\pi(z_n)$,
	\[
	\frac{K^{T}(t_n)}{n}\ \ge\ \frac{K^{P^{-1}\circ T}(z_n)}{n}.
	\]
	Taking $\liminf$ and applying Lemma~\ref{lem:normal-dim1} for the transducer $P^{-1}\circ T$ we obtain
	$\liminf_{n}K^{T}(t_n)/n = 1$. 
\end{proof}

Fix a bijection $\pi:\Sigma\to\Sigma$ with no fixed points (such a derangement exists for all $k\ge 2$) and with $\pi(0)\neq 0$.
Let $Z$ be as in Lemma~\ref{lem:normal-dim1}$,$ let $Y=\pi(Z)$, and write
$z_n:=Z\prefix n$ and $y_n:=Y\prefix n=\pi(z_n)$. Then $z_n\neq y_n$ for all $n$ because $\pi$ has no fixed points, and
$y_n\neq 0^n$ for all $n$ because $\pi(0)\neq 0$.

\subsection{Construction of the two separator enumerators}

Fix $x:=\tfrac12\in(0,1)$. For each $n\in\N$, define $r_n:=k^{-(n+2)}$ and
\[
A_n:=\bigl(x-r_n,\ x-r_{n+1}\bigr)\ \cup\ \bigl(x+r_{n+1},\ x+r_n\bigr).
\]
Then $A_n$ is a nonempty open set and $(A_n)_{n\ge 0}$ partition the punctured neighborhood
$$\bigl(x-r_0,\ x+r_0\bigr)\setminus\{x\}=\bigsqcup_{n\ge 0} A_n$$, and $\log_k(1/r_n)=n+2$.
Define the target \emph{best-from-below} values
$$c_n:=x-\frac{r_n+r_{n+1}}{2}\in(x-r_n,x-r_{n+1})\subseteq A_n$$. These values will later be enforced as the canonical
length-$n$ approximants to $x$ for both enumerators, i.e., we will arrange $a_n^{f_i}(x)=c_n$ for each $n$.
Observe that $c_n<x$ and $c_n\uparrow x$ as $n\to\infty$.

For each $n$, fix a computable listing $(d_{n,t})_{t\in\N}$ of \emph{distinct} rationals in
$A_n\cap\Q\setminus\{c_n\}$ whose range is dense in $A_n$, and set $D_n:=\{d_{n,t}:t\in\N\}$.
Also fix a computable listing $(q_t)_{t\in\N}$ of \emph{distinct} rationals in
$\bigl([0,1)\setminus(x-r_0,x+r_0)\bigr)\cap\Q$ whose range is dense there, and set
$D_{\mathrm{far}}:=\{q_t:t\in\N\}$. We now define two functions $f_0,f_1:\Sigma^{*}\to[0,1)$.

\noindent\textbf{Step 1 (fixing the best-from-below chain for lengths $\le n$).}
For each $n\ge 0$, define
\[
f_0(0^n) := c_n,\qquad f_1(z_n) := c_n.
\]
(Here $0\in\Sigma$ is a fixed symbol, and $0^n$ is the all-$0$ word of length $n$.)

\noindent\textbf{Step 2 (ensuring density near $x$ using incompressible prefixes).}
Partition $\N$ into pairwise disjoint infinite sets $(L_n)_{n\ge 0}$ such that $m\in L_n$ implies $m\ge n+1$.
For concreteness, one may take $L_n=\{2^{n}(2t+1): t\in\N\}$; then $m\in L_n$ implies $m\ge 2^n\ge n+1$ for $n\ge 0$.
Fix computable bijections $\varphi_n:L_n\to D_n$ by setting, for $t\in\N$,
$$
\varphi_n\bigl(2^{n}(2t+1)\bigr):=d_{n,t}.
$$
Now set, for each $n\ge 0$ and each $m\in L_n$,
\[
f_0(y_m)=f_1(y_m):=\varphi_n(m)\in A_n.
\]
Note that there is no conflict with Step~1, since $y_m\neq 0^m$ and $y_m\neq z_m$ for all $m$.
Thus, for each $n$, the set $\{f_i(y_m): m\in L_n\}$ is dense in $A_n$ (for both $i=0,1$).

\noindent\textbf{Step 3 (defining all remaining values far from $x$ while maintaining density).}
Let $W$ be the set of all remaining strings not yet assigned a value by Steps 1--2:
\[
W := \Sigma^{*}\setminus\Bigl(\{0^n:n\in\N\}\ \cup\ \{z_n:n\in\N\}\ \cup\ \{y_n:n\in\N\}\Bigr).
\]
Enumerate $\Sigma^{*}$ in the length lexicographic order, then all words of length $2$ in lexicographic order, and so on.  Call this enumeration
$(u_j)_{j\in\N}$. Define $w_j$ to be the $j$th word in this list that lies in $W$, i.e., the $j$th $u_t$ such that
$
u_t\notin \{0^n:n\in\N\}\ \cup\ \{z_n:n\in\N\}\ \cup\ \{y_n:n\in\N\}.
$
Also enumerate $D_{\mathrm{far}}=\{q_0,q_1,q_2,\dots\}$ according to the fixed computable listing above.
Define $f_0(w_j)=f_1(w_j):=q_j$ for all $j$.

\begin{lemma}\label{lem:SE}
	The functions $f_0,f_1$ defined above are total computable rational-valued separator enumerators:
	each $\mathrm{Im}(f_i)$ is countable and dense in $[0,1)$.
\end{lemma}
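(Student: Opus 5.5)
We check, in order, that $f_0,f_1$ are well defined and total, that they are rational-valued with values in $[0,1)$, that they are computable, and finally that their images are dense. The point needing care is well-definedness, because the three families $\{0^n\}$, $\{z_n\}$, $\{y_n\}$ may overlap, and Step~3 removes \emph{all} of them from $W$.

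\textbf{Consistency and totality.} First, $\lambda=0^0=z_0=y_0$. Step~2 never touches $y_0$, since $m\in L_n$ forces $m\ge 1$, so $\lambda$ gets only the value $c_0$. For $m\ge1$, Step~2 assigns exactly one value to $y_m$, since the $L_n$ are pairwise disjoint and $m\mapsto y_m$ is injective (distinct lengths). Moreover $y_m\notin\{0^m,z_m\}$, and words of different lengths cannot coincide, so Steps~1 and~2 never collide. Within Step~1, $0^n=z_n$ can happen, but this causes no conflict because $f_0$ and $f_1$ are separate functions.

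The real gap is that $W$ excludes $z_n$ for $f_0$ and $0^n$ for $f_1$. Neither Step~1 nor Step~2 assigns $f_0(z_n)$ when $z_n\neq 0^n$, nor $f_1(0^n)$ when $0^n\neq z_n$. I would close this with the reading clearly intended by the construction, stated explicitly as a convention: set $f_0(z_n):=c_n$ and $f_1(0^n):=c_n$ in these cases. This introduces no new values near $x$ other than the $c_n$, so it does not disturb the intended chain $a^{f_i}_n(x)=c_n$. With this, every word lies in exactly one of the classes $\{0^n\}\cup\{z_n\}$, $\{y_m:m\ge1\}$, or $W$, and each $f_i$ is a total function.

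\textbf{Rationality, range and computability.} Since $r_0=k^{-2}\le 1/4$, we have $(x-r_0,x+r_0)\subseteq[1/4,3/4]\subseteq[0,1)$. All the $c_n$ and $d_{n,t}$ are rationals in this interval, and the $q_t$ are rationals in $[0,1)$, so each $f_i$ is rational-valued with values in $[0,1)$.

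For computability, take $Z$ to be a computable normal sequence, for instance the base-$k$ Champernowne sequence; Lemma~\ref{lem:normal-dim1} applies to any normal sequence. Then $z_n$ and $y_n=\pi(z_n)$ are computable from $n$. Given $w$ with $|w|=m$, we can decide whether $w=0^m$, $w=z_m$, or $w=y_m$. If $w=y_m$ with $m\ge1$, we recover $n=v_2(m)$ and $t=(m/2^n-1)/2$ and output $d_{n,t}$ from its computable listing. Otherwise $w\in W$, and we compute its index $j$ by scanning $u_0,u_1,\dots$ up to $w$ and counting the members of $W$; we then output $q_j$. Every output is an exact rational, so the approximation requirement holds trivially and the algorithm halts on every input.

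\textbf{Density.} This is the main step. Each $\varphi_n$ is a bijection onto $D_n$, so $\mathrm{Im}(f_i)\supseteq\bigcup_n D_n\cup D_{\mathrm{far}}$; the inclusion $\mathrm{Im}(f_i)\supseteq D_{\mathrm{far}}$ holds because $W$ is infinite (it contains all but $O(1)$ words of each length), so $j\mapsto w_j$ is defined for every $j$. Now take an open interval $I\subseteq[0,1)$. There are two cases.
\begin{itemize}
\item If $I$ meets $[0,1)\setminus[x-r_0,x+r_0]$, then it meets that set in a nonempty open subset, where $D_{\mathrm{far}}$ is dense.
\item Otherwise $I\subseteq(x-r_0,x+r_0)$. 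Removing from $I$ the countably many points $x$ and $x\pm r_n$ leaves a nonempty open set contained in $\bigsqcup_n A_n$, so $I$ meets some $A_n$ in a nonempty open set, where $D_n$ is dense.
\end{itemize}
In both cases $I\cap\mathrm{Im}(f_i)\neq\emptyset$. Since $\mathrm{Im}(f_i)$ is countable, it is a separator, and this proves Lemma~\ref{lem:SE}.
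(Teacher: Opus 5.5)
The gap is in how you complete the definition. You are right that, because the single set $W$ excludes all three families, Steps 1--3 never assign $f_0(z_n)$ for $z_n\neq 0^n$, nor $f_1(0^n)$ for $0^n\neq z_n$. The paper's own proof passes over this too. But your convention $f_0(z_n):=c_n$, $f_1(0^n):=c_n$ makes $f_0$ and $f_1$ agree on every word. Then $\dim^{f_0}_{\mathrm{FS}}(x)=\dim^{f_1}_{\mathrm{FS}}(x)$, so Theorem~\ref{thm:negative} fails for the very functions your lemma is about.

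The fatal half is $f_1(0^n):=c_n$. It adds no new values, but it adds trivially compressible \emph{names} for values within $r_n$ of $x$. The same transducers $T_L$ used in Part 1 would then drive $\dim^{f_1}_{\mathrm{FS}}(x)$ to $0$. So ``it does not disturb the chain $a_n^{f_i}(x)=c_n$'' is the wrong test. Part 2 also needs every word whose $f_1$-value lies within $r_0$ of $x$ to be some $z_j$ or $y_m$. All other words must therefore be sent into $D_{\mathrm{far}}$, and the Part 2 proof says exactly this.

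The completion consistent with that, and with the phrase ``strings not yet assigned a value'', is to run Step 3 separately for each function. Use $W_0:=\Sigma^*\setminus(\{0^n:n\in\N\}\cup\{y_n:n\in\N\})$ for $f_0$ and $W_1:=\Sigma^*\setminus(\{z_n:n\in\N\}\cup\{y_n:n\in\N\})$ for $f_1$, enumerating each in length-lexicographic order and assigning the $q_j$ in turn. Each $W_i$ is decidable and infinite. Hence your computability argument goes through unchanged, and so does your two-case density argument, which is slightly more careful than the paper's because it accounts for the points $x\pm r_n$ that the annuli $A_n$ omit. Lemma~\ref{lem:an-same} is unaffected, since the newly assigned values lie outside $(x-r_0,x+r_0)$. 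With this correction, your proof follows the same route as the paper's.
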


\begin{proof}
	Countability is immediate since $\Sigma^{*}$ is countable.
	
	We also note that $f_0$ and $f_1$ are total computable (with rational outputs).
	Assume the fixed dense sets $D_n$ and $D_{\mathrm{far}}$ come with fixed computable enumerations, and that each $\varphi_n:L_n\to D_n\setminus\{c_n\}$
	is a fixed computable bijection. Also fix $Z$ to be a computable $\Sigma$-normal sequence, so $n\mapsto z_n=Z\prefix n$ is computable; then $Y=\pi(Z)$ is computable and $n\mapsto y_n=Y\prefix n$ is computable.
	
	On input $w\in\Sigma^{*}$ of length $\ell$, we compute $f_i(w)$ as follows.
	First check whether $w=0^\ell$ (all symbols equal $0$), whether $w=z_\ell$ (compute $z_\ell$ and compare), and whether $w=y_\ell$ (compute $y_\ell$ and compare).
	If $w=0^\ell$ output $f_0(w)=c_\ell$; if $w=z_\ell$ output $f_1(w)=c_\ell$; and if $w=y_\ell$ then find the unique $n$ with $\ell\in L_n$ (the $L_n$ are decidable and disjoint) and output $f_0(w)=f_1(w)=\varphi_n(\ell)$.
	Otherwise $w\in W$, and $W$ is decidable because $w\notin W$ iff one of the three checks above holds.
	In this case, compute the index $j$ such that $w=w_j$ in Step~3 by enumerating all words of $\Sigma^{*}$ by increasing length, and within each fixed length in lexicographic order, skipping exactly those words that are not in $W$, until $w$ is reached; then output $f_0(w)=f_1(w)=q_j$ (where $(q_j)$ is the fixed computable enumeration of $D_{\mathrm{far}}$).
	This procedure halts for every input $w$, so $f_0,f_1$ are total computable.
	
	For density, let $I\subseteq[0,1)$ be a nonempty open interval.
	If $I$ intersects $[0,1)\setminus(x-r_0,x+r_0)$, then $I$ contains a rational in $D_{\mathrm{far}}$, hence an image point of $f_i$.
	Otherwise, $I\subseteq(x-r_0,x+r_0)$, so $I$ intersects $A_n$ for some $n\ge 0$ (because the annuli partition
	$(x-r_0,x+r_0)\setminus\{x\}$ and $I$ is open, hence cannot be $\{x\}$). Since $D_n$ is dense in $A_n$ and
	$\{f_i(y_m):m\in L_n\}=\varphi_n(L_n)$ is dense in $A_n$, the interval $I$ contains some $f_i(y_m)$.
	Thus $\mathrm{Im}(f_i)$ meets every nonempty open interval, so it is dense.
\end{proof}

Since $f_0$ and $f_1$ are rational-valued and the above procedure computes the defining case and the corresponding index effectively, it in fact yields exact computation: there is a Turing machine that, on input $w\in\Sigma^*$, halts and outputs the rational value $f_i(w)$ itself (rather than merely producing $2^{-t}$-approximations). The next lemma shows that $f_0$ and $f_1$ induce exactly the same sequence of length-bounded approximations of $x$ from below (and hence the same associated numeric scaling sequence).

\begin{lemma}\label{lem:an-same}
	For every $n\in\N$,
	$
	a_n^{f_0}(x)=a_n^{f_1}(x)=c_n.
	$
	Consequently, the numeric sequences $(k^{n}a_n^{f_0}(x))$ and $(k^{n}a_n^{f_1}(x))$ are identical.
\end{lemma}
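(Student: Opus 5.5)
The plan is a direct verification from the construction: for each $n$ and each $i\in\{0,1\}$ we show that $c_n$ is attained by some word of length $\le n$, and that every value $f_i(w)$ with $\len{w}\le n$ and $f_i(w)\le x$ satisfies $f_i(w)\le c_n$. Attainment is immediate from Step~1. We have $f_0(0^n)=c_n$ and $f_1(z_n)=c_n$, and $c_n<x$. So $c_n$ lies in the set whose maximum defines $a_n^{f_i}(x)$.

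For the upper bound, fix $w$ with $\len{w}=\ell\le n$ and split into cases according to which step assigned $f_i(w)$.

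\emph{Step~1 values.} Here $f_i(w)=c_\ell$. Since $r_m=k^{-(m+2)}$ is strictly decreasing, $c_m=x-(r_m+r_{m+1})/2$ is strictly increasing. Hence $c_\ell\le c_n$.

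\emph{Step~2 values.} Here $w=y_\ell$ with $\ell\in L_{n'}$, and $f_i(w)=\varphi_{n'}(\ell)\in A_{n'}$. The key point is the constraint $m\in L_{n'}\Rightarrow m\ge n'+1$, which gives $n'\le \ell-1\le n-1$. If the value lies in the upper half $(x+r_{n'+1},x+r_{n'})$ of $A_{n'}$, it exceeds $x$ and is irrelevant. Otherwise it lies in $(x-r_{n'},x-r_{n'+1})$, so it is $<x-r_{n'+1}$. Since $n'+1\le n$, we have $x-r_{n'+1}\le x-r_n$, and $x-r_n<c_n$ because $c_n\in(x-r_n,x-r_{n+1})$. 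Therefore the value is $<c_n$.

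\emph{Step~3 values.} Here $f_i(w)=q_j\in[0,1)\setminus(x-r_0,x+r_0)$. Either $q_j\ge x+r_0>x$, which is irrelevant, or $q_j\le x-r_0<x-r_n<c_n$.

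Since Steps~1--3 cover all words (each word is assigned exactly once, as noted in the construction), the maximum in Definition~\ref{def:bestapprox} equals $c_n$ for both $i=0,1$. The "consequently" clause then follows by multiplying by $k^n$.

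The only step needing care is the Step~2 case. There one must use the offset condition on the $L_{n'}$ to ensure that short words $y_\ell$ can only name points in annuli $A_{n'}$ with $n'<n$, whose lower parts lie strictly below $c_n$. This is exactly why $L_n$ was chosen with $m\ge n+1$.
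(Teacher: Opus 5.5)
Your proof is correct and follows the paper's argument essentially line for line: attainment via $f_0(0^n)=c_n$ and $f_1(z_n)=c_n$, then the same split by construction step, with $m\in L_j\Rightarrow m\ge j+1$ doing the work in the Step~2 case (you are slightly more explicit than the paper about discarding the upper half of $A_{n'}$). The one thing you take on trust, as the paper's proof also does, is that every word receives exactly one value: with $W$ as displayed, $f_0$ is never assigned on the words $z_\ell\neq 0^\ell$, nor $f_1$ on the words $0^\ell\notin\{z_\ell,y_\ell\}$ (these should be given values outside $(x-r_0,x+r_0)$, so that your Step~3 case covers them), and $y_\ell\neq 0^\ell$ for all $\ell\ge 1$ actually requires $Z$ not to begin with $\pi^{-1}(0)$, not merely $\pi(0)\neq 0$.
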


\begin{proof}
	Fix $n$. We first show $a_n^{f_0}(x)=c_n$. By definition, $f_0(0^n)=c_n\le x$, so $a_n^{f_0}(x)\ge c_n$.
	Now consider any string $w$ with $\len{w}\le n$ and $f_0(w)\le x$. If $w=0^m$ for some $m\le n$, then $f_0(w)=c_m\le c_n$ because $(c_m)$ is increasing. If $w=y_m$ for some $m$, then $f_0(w)\in A_j$ for some $j$ with $m\in L_j$. In particular, $f_0(w)\le x-r_{j+1}<x$.
	Moreover, since $m\in L_j$ implies $m\ge j+1$, we have $j\le m-1\le n-1$ whenever $m\le n$.
	Thus $f_0(w)\le x-r_{j+1}\le x-r_n < c_n$ (because $c_n>x-r_n$ by construction). Finally, if $w\in W$ then $f_0(w)\in D_{\mathrm{far}} \subseteq [0,1)\setminus(x-r_0,x+r_0)$, so either $f_0(w)\le x-r_0<x-r_n<c_n$
	or $f_0(w)\ge x+r_0>x$. In either case it cannot exceed $c_n$ while staying $\le x$. Therefore, among all $\len{w}\le n$ with $f_0(w)\le x$, the maximum is attained at $w=0^n$ with value $c_n$.
	Hence $a_n^{f_0}(x)=c_n$.
	
	The proof for $f_1$ is identical, replacing the witness $0^n$ by $z_n$ (since $f_1(z_n)=c_n$) and noting that no other string of length
	$\le n$ attains a value in $(c_n,x]$ by the same case analysis.
	Thus $a_n^{f_1}(x)=c_n$.
\end{proof}

We now complete the construction by showing that, for the fixed point $x=\tfrac12$, the two separator enumerators $f_0$ and $f_1$ constructed above induce different relativized finite-state dimensions: $\dim_{\mathrm{FS}}^{f_0}(x)=0$ while $\dim_{\mathrm{FS}}^{f_1}(x)=1$. Together with Lemma~\ref{lem:an-same}, this yields a negative answer to Mayordomo's open question \cite{Mayordomo2025}.

\begin{theorem}\label{thm:negative}
	There exist total computable rational-valued separator enumerators $f_0,f_1:\Sigma^{*}\to[0,1)$ and a point $x\in[0,1)$ such that
	\[
	a_n^{f_0}(x)=a_n^{f_1}(x)\ \ \text{for all }n,
	\]
	yet
	\[
	\dim_{\mathrm{FS}}^{f_0}(x)=0
	\qquad\text{and}\qquad
	\dim_{\mathrm{FS}}^{f_1}(x)=1.
	\]
	In particular, $x$ is not $f_0$-normal, while $x$ is $f_1$-normal.
\end{theorem}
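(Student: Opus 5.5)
We take $x=\tfrac12$ and the enumerators $f_0,f_1$ constructed above. Lemma~\ref{lem:SE} says they are total computable rational-valued separator enumerators, and Lemma~\ref{lem:an-same} gives $a_n^{f_0}(x)=a_n^{f_1}(x)=c_n$ for all $n$. What remains is the two dimension computations.

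\textbf{Upper bound for $f_0$.} Let $T_0$ be the one-state transducer that outputs $0$ on every input symbol. Then $T_0(0)=0^n$ for $n\ge 1$, and $K^{T_0}(0^n)\le 1$. Given small $\delta>0$, choose the least $n$ with $r_n<\delta$; then $n\approx\log_k(1/\delta)$. Since $|f_0(0^n)-x|=x-c_n<r_n<\delta$, we get $K^{T_0,f_0}_\delta(x)\le 1$. Hence $K^{T_0,f_0}_\delta(x)/\log_k(1/\delta)\to 0$, so $\dim^{f_0}_{\mathrm{FS}}(x)=0$. (Dimension is nonnegative.)

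\textbf{Lower bound for $f_1$.} The bound $\dim^{f_1}_{\mathrm{FS}}(x)\le 1$ should come from an identity-like transducer: $K^{T}(z_n)\le n$ with $T$ the identity, and $z_n$ witnesses precision $r_n$. For $\ge 1$, fix any FST $T$ and small $\delta$. Let $N$ be the largest index with $r_{N+1}\ge\delta$, roughly $N=\log_k(1/\delta)-O(1)$. Any $w$ with $|f_1(w)-x|<\delta\le r_{N+1}$ has $f_1(w)\in\bigcup_{j>N}A_j\cup\{x\}$. Since $x=\tfrac12$ is not a value of $f_1$ (the $c_n$ and $A_j$ exclude $x$, and $D_{\mathrm{far}}$ is far), we have $f_1(w)\in A_j$ with $j\ge N+1$. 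By construction, either $w=z_j$ or $w=y_m$ with $m\in L_j$, so $m\ge j+1$. In both cases $w$ is a prefix $z_m$ or $y_m$ with $m\ge N+1$. Therefore
\[
K^{T,f_1}_\delta(x)\ \ge\ \min\Bigl(\inf_{m>N}K^{T}(z_m),\ \inf_{m>N}K^{T}(y_m)\Bigr).
\]

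\textbf{Main obstacle.} The inequality above involves an infimum over \emph{all} $m>N$, not just $m\approx N$. Lemmas~\ref{lem:normal-dim1} and~\ref{lem:perm} give $\liminf_m K^T(z_m)/m=1$ and $\liminf_m K^T(y_m)/m=1$. Thus for any $\epsilon>0$ there is $M$ with $K^T(z_m),K^T(y_m)\ge(1-\epsilon)m$ for $m\ge M$. For $m>N\ge M$ this gives $K^T(\cdot)\ge(1-\epsilon)m>(1-\epsilon)N$, which is monotone in the right direction, so the infimum is at least $(1-\epsilon)N$. (If the liminf were only along a subsequence, this step would fail; but the liminf bound is exactly what supplies the eventual lower bound.) Then
\[
\liminf_{\delta\to0}\frac{K^{T,f_1}_\delta(x)}{\log_k(1/\delta)}\ \ge\ 1-\epsilon .
\]
Letting $\epsilon\to0$ and taking the infimum over $T$ gives $\dim^{f_1}_{\mathrm{FS}}(x)\ge 1$. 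Together with the upper bound, $\dim^{f_1}_{\mathrm{FS}}(x)=1$.

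The remaining care points are checking that $x\notin\mathrm{Im}(f_1)$, and handling the $O(1)$ offset between $N$ and $\log_k(1/\delta)$ when passing to the limit.
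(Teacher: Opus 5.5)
The $f_1$ part of your proof is correct and is essentially the paper's Part~2. Your upper bound for $f_0$, however, rests on a false claim, so as written that part does not prove $\dim^{f_0}_{\mathrm{FS}}(x)=0$. (For $f_1$ you bound $K^{T,f_1}_\delta(x)$ directly for arbitrary $\delta$ through the index $N$, rather than at the scales $r_n$ followed by monotonicity. You also make explicit the upper bound via the identity transducer and the fact $x\notin\mathrm{Im}(f_1)$, both of which the paper leaves implicit.)

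\textbf{The error.} The claim ``$T_0(0)=0^n$, hence $K^{T_0}(0^n)\le 1$'' is false. By definition $T(\pi)=\nu(q_0,\pi)$ concatenates one fixed finite output string per input symbol. So your $T_0$ satisfies $T_0(\pi)=0^{|\pi|}$ and $K^{T_0}(0^n)=n$, and with this transducer your estimate only gives a ratio $\le 1$. No other single transducer can fix this. If $L_T$ is the maximum length of an output $\nu(q,a)$, then $|T(\pi)|\le L_T|\pi|$, so $K^T(0^n)\ge n/L_T$ for every fixed $T$. Dimension $0$ is therefore reached only through the infimum over transducers in the definition, and that is exactly what the paper exploits.

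\textbf{The repair.} For each $L\ge 1$ let $T_L$ output $0^L$ on every input symbol. Given $\delta$ and your $n$ (the least $n$ with $r_n<\delta$), put $t=\lceil n/L\rceil$. Then $T_L(0^t)=0^{Lt}$ with $Lt\ge n$, and
$|f_0(0^{Lt})-x|=x-c_{Lt}<r_{Lt}\le r_n<\delta$.
Hence $K^{T_L,f_0}_\delta(x)\le\lceil n/L\rceil$ with $n\le\log_k(1/\delta)$, so $\liminf_{\delta\to0^+}K^{T_L,f_0}_\delta(x)/\log_k(1/\delta)\le 1/L$. Letting $L\to\infty$ gives $\dim^{f_0}_{\mathrm{FS}}(x)=0$. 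Using the witness $0^{Lt}$ rather than $0^n$ matters: $T_L$ only produces outputs whose length is a multiple of $L$, so $K^{T_L}(0^n)=\infty$ when $L\nmid n$.
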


\begin{proof}
	We use the constructions above with the fixed $x=\tfrac12$.
	By Lemma~\ref{lem:an-same}, the sequences $a_n^{f_0}(x)$ and $a_n^{f_1}(x)$ coincide (both equal $c_n$),
	so $(k^n a_n(x))$ is identical for $f_0$ and $f_1$. It remains to compute the relativized finite-state dimensions.
	
	\noindent
	\textbf{Part 1: We show $\dim_{\mathrm{FS}}^{f_0}(x)=0$.}
	For each integer $L\ge 1$, let $T_L$ be the 1-state transducer that outputs $0^{L}$ on every input symbol.
	Then $T_L(\pi)=0^{L\len{\pi}}$ for all $\pi$, hence
	$$
	K^{T_L}(0^n)\le \left\lceil \frac{n}{L}\right\rceil.
	$$
	Fix $n$ and consider $\delta_n:=r_n/2$ (so $\log_k(1/\delta_n)=n+2+\log_k 2$).
	Since $f_0(0^n)=c_n$ and $\abs{x-c_n}=\abs{x-(x-(r_n+r_{n+1})/2)}=(r_n+r_{n+1})/2<r_n=\!2\delta_n$,
	we have $\abs{f_0(0^n)-x}<2\delta_n$.Therefore $$K^{T_L,f_0}_{2\delta_n}(x)\le K^{T_L}(0^n)\le \left\lceil \frac{n}{L}\right\rceil.$$ Hence
	\[
	\liminf_{n\to\infty}\frac{K^{T_L,f_0}_{2\delta_n}(x)}{\log_k(1/(2\delta_n))}
	\le
	\liminf_{n\to\infty}\frac{\lceil n/L\rceil}{(n+2)+O(1)}
	=\frac{1}{L}.
	\]
	To justify that this controls the full $\liminf_{\delta\to 0^+}$ (and not only the subsequence $2\delta_n$), note that
	$K^{T_L,f_0}_{\delta}(x)$ is monotone non-increasing in $\delta$. Hence for any $\delta\in(2\delta_{n+1},\,2\delta_n]$,
	\[
	K^{T_L,f_0}_{\delta}(x)\ \le\ K^{T_L,f_0}_{2\delta_{n+1}}(x).
	\]
	Moreover, for such $\delta$ we have $\log_k(1/\delta)\ge \log_k(1/(2\delta_n))$. Therefore,
	\[
	\frac{K^{T_L,f_0}_{\delta}(x)}{\log_k(1/\delta)}
	\ \le\
	\frac{K^{T_L,f_0}_{2\delta_{n+1}}(x)}{\log_k(1/(2\delta_n))}.
	\]
	Taking $\liminf$ over all $\delta\to 0^+$ and using that $n\to\infty$ along the corresponding intervals yields
	\begin{align*}
		\liminf_{\delta\to 0^+}\frac{K^{T_L,f_0}_{\delta}(x)}{\log_k(1/\delta)}
		&\le\
		\liminf_{n\to\infty}\frac{K^{T_L,f_0}_{2\delta_{n+1}}(x)}{\log_k(1/(2\delta_{n+1}))} \cdot \frac{\log_k(1/(2\delta_{n+1}))}{\log_k(1/(2\delta_{n}))} \\
		&\le\
		\liminf_{n\to\infty}\frac{K^{T_L,f_0}_{2\delta_{n+1}}(x)}{\log_k(1/(2\delta_{n+1}))} \cdot \lim_{n\to\infty} \frac{n+3+O(1)}{n+2+O(1)} \\
		& \le\ \frac{1}{L}.
	\end{align*}
	Since the above holds for all $T_L$, taking the infimum over all transducers $T$ and then letting $L\to\infty$ yields
	$\dim_{\mathrm{FS}}^{f_0}(x)=0$.
	
	\noindent\textbf{Part 2: We show $\dim_{\mathrm{FS}}^{f_1}(x)=1$.}
	Fix an arbitrary $\Sigma$-FST $T$. We prove that $\liminf_{\delta\to 0^+}\frac{K^{T,f_1}_{\delta}(x)}{\log_k(1/\delta)}\ge 1$,
	which implies $\dim_{\mathrm{FS}}^{f_1}(x) = 1$ after taking the infimum over $T$.
	
	Consider the scale $\delta_n:=r_n$. Any value $f_1(w)$ within distance $\delta_n=r_n$ of $x$ must lie in
	$(x-r_n,\ x+r_n)=\{x\}\cup \bigsqcup_{j\ge n} A_j$, so $w$ must be either (i) one of the special words $z_j$ with $j\ge n$
	(since $f_1(z_j)=c_j\in A_j$), or (ii) one of the special words $y_m$ with $f_1(y_m)\in A_j$ for some $j\ge n$
	(these are the values placed densely in the annuli), because by construction all other words are mapped into
	$D_{\mathrm{far}}$ outside $(x-r_0,x+r_0)$. Therefore,
	\[
	K^{T,f_1}_{r_n}(x)\ \ge\ \min\Bigl(\ \min_{j\ge n} K^T(z_j),\ \min_{m:\ f_1(y_m)\in \bigcup_{j\ge n}A_j} K^T(y_m)\Bigr).
	\]
	We now lower bound each term asymptotically by $n$. By Lemma~\ref{lem:normal-dim1}, we have  $\liminf_{j\to\infty} K^T(z_j)/j=1$.
	Hence for every $\varepsilon>0$ there exists $J$ such that for all $j\ge J$, $K^T(z_j)\ge (1-\varepsilon)j$; consequently,
	for all $n\ge J$, $\min_{j\ge n}K^T(z_j)\ge (1-\varepsilon)n$. Similarly, by Lemma~\ref{lem:perm} applied to $Y$,
	$\liminf_{m\to\infty}K^T(y_m)/m=1$, so for every $\varepsilon>0$ there exists $M$ such that for all $m\ge M$,
	$K^T(y_m)\ge (1-\varepsilon)m$. In our construction, if $f_1(y_m)\in A_j$, then $m\in L_j$, and by design $m\ge j+1$;
	therefore, whenever $f_1(y_m)\in \bigcup_{j\ge n}A_j$, we have $m\ge n+1$. For all $n\ge M$,
	\[
	\min_{m:\ f_1(y_m)\in \bigcup_{j\ge n}A_j} K^T(y_m)\ \ge\ (1-\varepsilon)(n+1)\ \ge\ (1-\varepsilon)n.
	\]
	Putting the two bounds together, for all sufficiently large $n$, $K^{T,f_1}_{r_n}(x)\ge (1-\varepsilon)n$.
	Since $\log_k(1/r_n)=n+2$, we obtain
	\[
	\liminf_{n\to\infty}\frac{K^{T,f_1}_{r_n}(x)}{\log_k(1/r_n)}
	\ \ge\ \liminf_{n\to\infty}\frac{(1-\varepsilon)n}{n+2}\ =\ 1-\varepsilon.
	\]
	As $\varepsilon>0$ was arbitrary, this yields $\liminf_{n\to\infty}\frac{K^{T,f_1}_{r_n}(x)}{\log_k(1/r_n)}\ge 1$.
	By monotonicity of $K^{T,f_1}_{\delta}(x)$ in $\delta$ and the fact that any $\delta\in(r_{n+1},r_n]$ satisfies
	$\log_k(1/\delta)\in[n+2,n+3)$, it follows that the full $\liminf_{\delta\to 0^+}$ is also at least $1$. Thus $$\liminf_{\delta\to 0^+}\frac{K^{T,f_1}_{\delta}(x)}{\log_k(1/\delta)}\ge 1$$ for every $T$, which implies that
	$\dim_{\mathrm{FS}}^{f_1}(x)=1$.

\end{proof}

As an immediate consequence, $f$-normality cannot be characterized by any property of the single numerical sequence $(k^{n}a_n^{f}(x))$.

\begin{corollary}\label{cor:nochar}
	There is no property $P$ of the numeric sequence $(k^{n}a_n^{f}(x))_{n\in\N}$ (in particular, no equidistribution
	property of this sequence) such that for all separator enumerators $f$ and all $x\in[0,1)$,
	\[
	x\text{ is $f$-normal}\ \Longleftrightarrow\ (k^{n}a_n^{f}(x))\text{ has property }P.
	\]
\end{corollary}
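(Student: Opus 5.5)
The plan is a direct proof by contradiction, with Theorem~\ref{thm:negative} doing all the work. Suppose some property $P$ of numeric sequences had the stated property: for every separator enumerator $f$ and every $x\in[0,1)$, $x$ is $f$-normal if and only if $(k^n a_n^f(x))_{n\in\N}$ has $P$. We apply this hypothesis twice, to the pairs $(f_0,x)$ and $(f_1,x)$, where $f_0,f_1$ and $x=\tfrac12$ are the enumerators and point from Theorem~\ref{thm:negative}.

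First, we check that the hypothesis really applies. By Lemma~\ref{lem:SE}, both $f_0$ and $f_1$ are separator enumerators. By Lemma~\ref{lem:an-same}, $a_n^{f_0}(x)=a_n^{f_1}(x)=c_n$ for every $n$. Hence the two scaled sequences $(k^n a_n^{f_0}(x))_n$ and $(k^n a_n^{f_1}(x))_n$ coincide term by term, so they are literally the same element of $\mathbb{R}^{\N}$. Any property $P$ of numeric sequences therefore holds for both or for neither.

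Next, we use the dimension computation. Theorem~\ref{thm:negative} gives $\dim_{\mathrm{FS}}^{f_1}(x)=1$, so $x$ is $f_1$-normal. It also gives $\dim_{\mathrm{FS}}^{f_0}(x)=0\neq 1$, so $x$ is not $f_0$-normal. By the assumed equivalence, the first fact says the common sequence has $P$, while the second says it does not. This is a contradiction, so no such $P$ exists.

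The special case about equidistribution follows at once. Any equidistribution property, such as uniform distribution modulo $1$ or $k$-adic equidistribution of residues, is a property of the sequence alone, so it is an instance of $P$. There is no real obstacle here. The only point needing care is that $P$ must depend only on the sequence $(k^na_n^f(x))_n$, not on $f$ or on the witnessing words, and that is exactly how the corollary is formulated.
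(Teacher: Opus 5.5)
Your proof is correct and is essentially the paper's own argument. Like the paper, you take $f_0,f_1$ and $x=\tfrac12$ from Theorem~\ref{thm:negative}, use Lemma~\ref{lem:an-same} to see that the two scaled sequences are identical (so any $P$ holds for both or for neither), and derive a contradiction from $\dim_{\mathrm{FS}}^{f_1}(x)=1$ versus $\dim_{\mathrm{FS}}^{f_0}(x)=0$; your extra appeal to Lemma~\ref{lem:SE}, confirming that the hypothesis on $P$ applies to both enumerators, is a harmless check the paper leaves implicit.
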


\begin{proof}
	Take $f_0,f_1,x$ from Theorem~\ref{thm:negative}. By Lemma~\ref{lem:an-same}, the sequences $(k^{n}a_n^{f_0}(x))$
	and $(k^{n}a_n^{f_1}(x))$ are identical, so they either both satisfy $P$ or both fail $P$.
	But by Theorem~\ref{thm:negative}, $x$ is $f_1$-normal and not $f_0$-normal, so no such $P$ can exist.
\end{proof}

\section{Finite-State Coherent Enumerators and an Equidistribution Characterization of $f$-Normality}
\label{sec:coherentenumerators}

From the previous section, we know that no equidistribution (or other distributional) criterion depending only on the sequence $(k^{n}a_n^{f}(x))_{n\ge 1}$ can characterize $f$-normality uniformly over all separator
enumerators. The goal of this section is to isolate a natural structural regime in which such a characterization
\emph{does} hold. We do this by restricting to \emph{finite-state coherent enumerators}, i.e.\ naming schemes obtained
from the standard base-$k$ grid by an invertible synchronous Mealy-machine relabeling.



To state an equidistribution characterization in this setting, we must use a notion that remains meaningful for the standard enumerator. For the standard base-$k$ naming map
$f_{\mathrm{std}}(w)=\mathrm{val}(w)/k^{|w|}$, the scaled approximation sequence
$b_n(x):=k^n a_n^{f_{\mathrm{std}}}(x)$ is integer-valued (indeed $b_n(x)=\lfloor k^n x\rfloor$), so the usual notion of
equidistribution modulo $1$ becomes trivial. The natural replacement is to ask for uniform distribution of the residues
of $b_n(x)$ at every finite base-$k$ resolution, i.e.\ modulo $k^m$ for each fixed $m$. This leads to the following
notion of $k$-adic equidistribution, which we adopt in the rest of the paper.

\begin{definition}[$k$-adic equidistribution]\label{def:kadic}
	A sequence $(b_n)_{n\ge 1}$ of integers is \emph{$k$-adically equidistributed} if for every $m\ge 1$ and every residue
	$r\in\{0,1,\dots,k^m-1\}$,
	\[
	\lim_{N\to\infty}\frac{1}{N}\#\{1\le n\le N:\ b_n\equiv r\pmod{k^m}\}=\frac{1}{k^m}.
	\]
\end{definition}

For finite-state coherent enumerators, the best-from-below approximants admit an explicit closed form, and the induced
$f$-dimension is invariant under finite-state coherent relabelings. These two facts combine to yield a clean
equidistribution characterization of $f$-normality in terms of $k$-adic equidistribution of the integer sequence
$(k^{n}a_n^{f}(x))_{n\ge 1}$.

\subsection{Finite-state coherent enumerators}

We begin by defining invertible synchronous Mealy machines.

\begin{definition}[Invertible synchronous Mealy machine]
	An \emph{invertible synchronous Mealy machine} is a tuple $M=(Q,\delta,\lambda,q_0)$ where $Q$ is a finite
	nonempty set of states, $\delta:Q\times\Sigma\to Q$ is a transition function, and
	$\lambda:Q\times\Sigma\to \Sigma$ is an output function such that for every state $q\in Q$, the map
	$a\mapsto \lambda(q,a)$ is a permutation of $\Sigma$. 
	
	The induced map $M:\Sigma^*\to\Sigma^*$ is defined by reading left-to-right: set $M(\lambda)=\lambda$, and for
	$w=w_1\cdots w_n$ define $u=M(w)=u_1\cdots u_n$ by the recursion
	\[
	q_{i}=\delta(q_{i-1},w_i),\qquad u_i=\lambda(q_{i-1},w_i)\quad (i=1,\dots,n),
	\]
	with $q_0$ as the initial state. In particular, $\len{M(w)}=\len{w}$ for all $w$.
\end{definition}

We state the basic properties of invertible synchronous Mealy machines.

\begin{lemma}\label{lem:mealy-basic}
	Let $M$ be an invertible synchronous Mealy machine.
	\begin{enumerate}
		\item For every $n$, the restriction $M:\Sigma^n\to\Sigma^n$ is a bijection.
		\item There exists an invertible synchronous Mealy machine $M^{-1}$ such that for all $w\in\Sigma^*$, we have $M^{-1}(M(w))=w$ .
		\item $M$ extends letter-by-letter (via the same recursion as in the definition of $M$) to a bijection $M:\Sigma^\infty\to\Sigma^\infty$ with inverse $M^{-1}$.
	\end{enumerate}
\end{lemma}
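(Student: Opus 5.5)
The plan is to prove the three items in order, building an explicit inverse machine in item 2 and obtaining items 1 and 3 from it. Throughout, write $\sigma_q(a):=\lambda(q,a)$; by hypothesis each $\sigma_q$ is a permutation of $\Sigma$.

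For item 2, define $M^{-1}=(Q,\delta',\lambda',q_0)$ on the same state set by
\[
\lambda'(q,b):=\sigma_q^{-1}(b),\qquad \delta'(q,b):=\delta\bigl(q,\sigma_q^{-1}(b)\bigr).
\]
Each $\lambda'(q,\cdot)=\sigma_q^{-1}$ is a permutation, so $M^{-1}$ is again an invertible synchronous Mealy machine. We then show by induction on $i$ that the two machines stay in lockstep. Let $w=w_1\cdots w_n$ and $u=M(w)$, with states $q_0,\dots,q_n$ of $M$ on $w$. Run $M^{-1}$ on $u$ with states $p_0=q_0,p_1,\dots$ and suppose $p_{i-1}=q_{i-1}$. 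Then the $i$th output of $M^{-1}$ is $\sigma_{q_{i-1}}^{-1}(u_i)=w_i$, since $u_i=\sigma_{q_{i-1}}(w_i)$. Likewise $p_i=\delta(q_{i-1},w_i)=q_i$. Hence $M^{-1}(M(w))=w$. The symmetric induction gives $M(M^{-1}(u))=u$: if $M^{-1}$ outputs $w_i=\sigma_{p_{i-1}}^{-1}(u_i)$, then $M$ in state $p_{i-1}$ on $w_i$ outputs $u_i$ and moves to $\delta(p_{i-1},w_i)=\delta'(p_{i-1},u_i)$.

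Item 1 follows immediately. Both $M$ and $M^{-1}$ are length-preserving, so they map $\Sigma^n$ to $\Sigma^n$, and on this finite set they are mutually inverse. Alternatively, $M\restriction\Sigma^n$ is injective by the left-inverse identity, and an injection of a finite set into itself is a bijection.

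For item 3, extend $M$ to $X\in\Sigma^\infty$ by running the same recursion on all indices. The output is prefix-consistent: $M(X)\prefix n=M(X\prefix n)$, because the $n$th output symbol depends only on $X\prefix n$. The same holds for $M^{-1}$. Consequently
\[
M^{-1}(M(X))\prefix n=M^{-1}\bigl(M(X\prefix n)\bigr)=X\prefix n\quad\text{for all }n,
\]
so $M^{-1}(M(X))=X$, and symmetrically $M(M^{-1}(Y))=Y$. Thus the extension is a bijection of $\Sigma^\infty$ with inverse $M^{-1}$.

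The only step needing care is choosing the inverse machine's transition function so that its state tracks the state of $M$. The rest is routine induction.
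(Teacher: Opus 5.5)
Your proof is correct, but it is organized differently from the paper's. The paper proves item~1 directly: at the first position where $w\neq w'$, the machine is in the same state and applies a permutation to two distinct letters, so $M$ is injective on the finite set $\Sigma^n$. It then only sketches the inverse machine in item~2 (``update the state consistently''), and obtains item~3 from levelwise bijectivity on finite prefixes.

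You instead make item~2 the core. You give the transition $\delta'(q,b)=\delta(q,\sigma_q^{-1}(b))$ explicitly, check by a lockstep induction that both $M^{-1}\circ M$ and $M\circ M^{-1}$ are the identity, and derive items~1 and~3 as corollaries. This makes precise exactly the step the paper leaves as ``standard.'' It also yields a two-sided inverse, which is what the later substitution $u=M(w)$ and the bijection $T\mapsto M\circ T$ (with inverse $S\mapsto M^{-1}\circ S$) rely on. Finally, it settles surjectivity on $\Sigma^\infty$ cleanly via prefix consistency, whereas the paper's ``follows from (1) on all finite prefixes'' leaves that step implicit.

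The paper's argument for item~1, by contrast, is self-contained and does not need the inverse machine at all.
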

\begin{proof}
	(1) Fix $n$. Because at each step the output letter is obtained by applying a permutation depending on the current state,
	distinct inputs cannot merge: if $w\neq w'$ then at the first position $i$ where they differ, the machine is in the same state
	(because it has read the same prefix) and applies a permutation to two different letters, hence outputs different letters at position $i$.
	Thus $M$ is injective on $\Sigma^n$, hence bijective because $\Sigma^n$ is finite.
	
	(2) One constructs $M^{-1}$ by reversing the per-state permutations: in state $q$ output $\lambda(q,\cdot)^{-1}(a)$ on input $a$,
	and update the state consistently (standard Mealy-machine inversion). Because all per-state maps are permutations, this is well-defined.
	
	(3) The same recursion as in definition of $M$ works on infinite inputs; bijectivity follows from (1) on all finite prefixes.
\end{proof}

Now we formalize the finite-state coherence condition, which captures length-preserving, bounded-memory
relabelings of the standard base-$k$ grid.

\begin{definition}[Finite-state coherent enumerator]\label{def:fs-coherent}
	A function $f:\Sigma^*\to[0,1)$ is \emph{finite-state coherent} if there exists an invertible synchronous Mealy machine $M$
	such that for every nonempty $w\in\Sigma^n$, $f(w)=\mathrm{grid}(M(w))=\frac{\mathrm{val}(M(w))}{k^n}$.
	(For definiteness, set $f(\lambda)=0$.)
\end{definition}

Finite-state coherence immediately forces $f$ to enumerate the entire standard $k$-adic grid at each length.

\begin{lemma}\label{lem:fscoherent-se}
	If $f$ is finite-state coherent, then $\mathrm{Im}(f)=\bigcup_{n\ge 1}\{j/k^n:0\le j<k^n\}$, hence $f$ is a separator enumerator.
\end{lemma}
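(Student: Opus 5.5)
The argument has three steps: compute the image level by level, check the dyadic-type grid is a separator, and account for the value at $\lambda$.

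\textbf{Image at a fixed length.} Fix $n\ge 1$. By Lemma~\ref{lem:mealy-basic}(1), $M$ restricted to $\Sigma^n$ is a bijection onto $\Sigma^n$. Hence
\[
f(\Sigma^n)=\mathrm{grid}(M(\Sigma^n))=\mathrm{grid}(\Sigma^n)=\{j/k^n:0\le j<k^n\}.
\]
The last equality holds because $\mathrm{val}:\Sigma^n\to\{0,\dots,k^n-1\}$ is the usual bijection between base-$k$ numerals and integers, as already noted after the definition of $\mathrm{grid}$.

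\textbf{The empty word.} We have $f(\lambda)=0$, and $0=0/k^1$ already lies in the length-$1$ grid. Taking the union over $n\ge 1$ therefore gives
\[
\mathrm{Im}(f)=\bigcup_{n\ge 1}\{j/k^n:0\le j<k^n\}.
\]

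\textbf{Separator property.} This set is a countable union of finite sets, so it is countable. For density, let $I\subseteq[0,1)$ be a nonempty open interval, say containing $(\alpha,\beta)$ with $\alpha<\beta$. Choose $n$ with $k^{-n}<\beta-\alpha$. Then some grid point $j/k^n$ with $0\le j<k^n$ lies in $(\alpha,\beta)$, for example $j=\lfloor k^n\alpha\rfloor+1$. This $j$ satisfies $j<k^n$ because $j/k^n\le \alpha+k^{-n}<\beta\le 1$. So $\mathrm{Im}(f)$ meets $I$, and $f$ is a separator enumerator by definition.

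No step is a real obstacle. The only point needing care is the bijectivity of $M$ on each $\Sigma^n$, and that is exactly Lemma~\ref{lem:mealy-basic}(1).
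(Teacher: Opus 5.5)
Your proposal is correct and follows the paper's proof: both compute $f(\Sigma^n)=\mathrm{grid}(M(\Sigma^n))=\mathrm{grid}(\Sigma^n)$ using the levelwise bijectivity of $M$ from Lemma~\ref{lem:mealy-basic}(1), and then take the union over $n\ge 1$. You also spell out two points the paper leaves implicit: that $f(\lambda)=0$ already lies in the level-$1$ grid, and an explicit density argument for the $k$-adic grid.
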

\begin{proof}
	Fix $n\ge 1$. By Lemma~\ref{lem:mealy-basic}(1), $M(\Sigma^n)=\Sigma^n$. Therefore
	\[
	f(\Sigma^n)=\mathrm{grid}(M(\Sigma^n))=\mathrm{grid}(\Sigma^n)=\left\{\frac{j}{k^n}:0\le j<k^n\right\}.
	\]
	Taking the union over $n$ gives the claimed image, which is countable and dense in $[0,1)$.
\end{proof}

For finite-state coherent enumerators, the best-from-below approximation sequence coincides with the usual base-$k$
truncations.

\begin{lemma}\label{lem:fscoherent-trunc}
	Let $f$ be finite-state coherent and $x\in[0,1)$. Then for every $n\ge 1$,
	$a_n^f(x)=\frac{\lfloor k^n x\rfloor}{k^n}$. In particular, the scaled sequence is integer-valued:
	$k^n a_n^f(x)=\lfloor k^n x\rfloor\in\{0,1,\dots,k^n-1\}$.
\end{lemma}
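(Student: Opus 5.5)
The plan is to compute the set of candidate values directly. By Lemma~\ref{lem:fscoherent-se} (more precisely its proof), for each fixed length $\ell\ge 1$ we have $f(\Sigma^\ell)=\{j/k^\ell:0\le j<k^\ell\}$, and $f(\lambda)=0$. Hence the set over which the maximum in Definition~\ref{def:bestapprox} is taken is
\[
S_n:=\{f(w):\ \len{w}\le n,\ f(w)\le x\}=\{0\}\cup\bigcup_{\ell=1}^{n}\Bigl\{\tfrac{j}{k^\ell}:\ 0\le j<k^\ell,\ \tfrac{j}{k^\ell}\le x\Bigr\}.
\]
The key observation is that the grids are nested: $j/k^\ell=(jk^{n-\ell})/k^n$ for $\ell\le n$, and $0=0/k^n$. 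Therefore $S_n=\{j/k^n:0\le j<k^n,\ j/k^n\le x\}$, and the Mealy machine $M$ plays no further role.

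Next, I will identify the maximum of $S_n$. Since $x\in[0,1)$, the integer $\lfloor k^n x\rfloor$ lies in $\{0,\dots,k^n-1\}$ and satisfies $\lfloor k^n x\rfloor/k^n\le x$, so it belongs to $S_n$. Conversely, any $j$ with $j/k^n\le x$ satisfies $j\le k^n x$, and since $j$ is an integer, $j\le\lfloor k^n x\rfloor$. Thus $a_n^f(x)=\max S_n=\lfloor k^n x\rfloor/k^n$. Multiplying by $k^n$ then gives the integer-valued claim, with $k^n a_n^f(x)\in\{0,\dots,k^n-1\}$.

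There is no real obstacle here. The only points needing care are the nesting of the $k$-adic grids, which ensures that shorter words contribute nothing beyond the length-$n$ grid, and the bound $\lfloor k^n x\rfloor\le k^n-1$, which follows from $x<1$. Both are routine checks.
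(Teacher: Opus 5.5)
Your proof is correct and follows the paper's route. Like the paper, you use Lemma~\ref{lem:fscoherent-se} to identify the values of words of length $\le n$ as the union of the $k$-adic grids up to level $n$, and then show that the largest of these values below $x$ is $\lfloor k^n x\rfloor/k^n$. The differences are cosmetic: you dispose of the coarser levels via the nesting $j/k^\ell=jk^{n-\ell}/k^n$ and also account explicitly for $f(\lambda)=0$, whereas the paper checks the equivalent inequality $k^{n-m}\lfloor k^m x\rfloor\le\lfloor k^n x\rfloor$ directly.
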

\begin{proof}
	By Lemma~\ref{lem:fscoherent-se}, for each $m\le n$ the set $f(\Sigma^m)$ equals the full grid
	$\{j/k^m:0\le j<k^m\}$. Hence the set of all values $f(w)$ with $\len{w}\le n$ is exactly
	$\bigcup_{m=1}^n \{j/k^m:0\le j<k^m\}$. Among the level-$n$ grid points $\{j/k^n\}$, the largest one $\le x$ is
	$\lfloor k^n x\rfloor/k^n$. It remains to check that no coarser grid point (denominator $k^m$ with $m<n$) can exceed
	this value while staying $\le x$. But for each $m<n$, $\frac{\lfloor k^m x\rfloor}{k^m}\le \frac{\lfloor k^n x\rfloor}{k^n}$, because multiplying both sides by
	$k^n$ gives $k^{n-m}\lfloor k^m x\rfloor \le \lfloor k^n x\rfloor$, which holds since
	$k^{n-m}\lfloor k^m x\rfloor \le k^n x$ and the left-hand side is an integer.
	
\end{proof}

Next we relate approximation complexity under a finite-state coherent $f$ to approximation complexity under the
standard base-$k$ enumerator. Recall that $f_{\mathrm{std}}:\Sigma^*\to[0,1)$ is the standard base-$k$ enumerator.

\begin{lemma}\label{lem:fscoherent-relabel}
	Let $f$ be finite-state coherent via a Mealy machine $M$, i.e.\ $f(w)=\mathrm{grid}(M(w))$ for all $w$.
	Then for every $\Sigma$-FST $T$, every $x\in[0,1)$, and every $\delta>0$,
	$
	K^{T,f}_{\delta}(x)=K^{M\circ T,\ f_{\mathrm{std}}}_{\delta}(x),
	$
	where $M\circ T$ denotes the output-composition transducer $\pi\mapsto M(T(\pi))$.
\end{lemma}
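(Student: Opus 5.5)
The plan is to show that the Mealy relabeling $M$ sets up a bijection on $\Sigma^*$ that transports, term by term, the minimization defining $K^{T,f}_{\delta}(x)$ onto the one defining $K^{M\circ T,f_{\mathrm{std}}}_{\delta}(x)$.

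\textbf{Step 1: $M\circ T$ is a $\Sigma$-FST.} Let $T=(Q_T,\delta_T,\nu_T,q_T)$ and $M=(Q_M,\delta_M,\lambda_M,q_M)$. Take states $Q_T\times Q_M$ and start state $(q_T,q_M)$. On input $a$ in state $(p,s)$, compute $v=\nu_T(p,a)$, output $M_s(v)$, and move to $(\delta_T(p,a),\delta_M(s,v))$. Here $M_s$ and $\delta_M(s,\cdot)$ denote the run of $M$ started in state $s$, extended to words as in the definition. By induction on $|\pi|$, the $M$-component of the state after reading $\pi$ is $\delta_M(q_M,T(\pi))$. Hence the accumulated output is $M(T(\pi))$, because $M$ processes its input left to right in one pass. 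This is routine, and I expect it to be the only step needing real care: one must check that feeding the blocks $\nu_T(p,a)$ consecutively into $M$ reproduces $M$ applied to their concatenation.

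\textbf{Step 2: $K^{M\circ T}(u)=K^T(M^{-1}(u))$ for every $u\in\Sigma^*$.} By Lemma~\ref{lem:mealy-basic}(1), $M$ is length-preserving and bijective on each $\Sigma^n$, hence a bijection of $\Sigma^*$ with inverse $M^{-1}$ from Lemma~\ref{lem:mealy-basic}(2). So for every $\pi$,
\[
M(T(\pi))=u \iff T(\pi)=M^{-1}(u).
\]
The two sets of inputs over which the minima are taken therefore coincide, including the case where both are empty and both quantities equal $\infty$.

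\textbf{Step 3: $f(w)=f_{\mathrm{std}}(M(w))$ for all $w\in\Sigma^*$.} For nonempty $w$ this is Definition~\ref{def:fs-coherent}, since $f_{\mathrm{std}}=\mathrm{grid}$ on nonempty words. For $w=\lambda$ we have $M(\lambda)=\lambda$ and $f(\lambda)=0=f_{\mathrm{std}}(\lambda)$.

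\textbf{Step 4: reindex the minimum.} Substitute $u=M(w)$, which ranges bijectively over $\Sigma^*$ as $w$ does. Then
\[
K^{T,f}_{\delta}(x)=\min\{K^T(w): |f(w)-x|<\delta\}=\min\{K^T(M^{-1}(u)): |f_{\mathrm{std}}(u)-x|<\delta\}.
\]
By Step 2 the right-hand side equals $\min\{K^{M\circ T}(u): |f_{\mathrm{std}}(u)-x|<\delta\}=K^{M\circ T,f_{\mathrm{std}}}_{\delta}(x)$, which is the claimed identity.
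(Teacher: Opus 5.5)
Your proposal is correct and follows essentially the paper's own argument. You write $f=f_{\mathrm{std}}\circ M$, reindex the minimum through the levelwise bijection $u=M(w)$, and identify $K^T(M^{-1}(u))$ with $K^{M\circ T}(u)$; your product construction showing $M\circ T$ is a $\Sigma$-FST and your check at $\lambda$ just fill in details the paper leaves implicit.
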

\begin{proof}
	Since $\mathrm{grid}(u)=f_{\mathrm{std}}(u)$ for every $u \in \Sigma^*$, we obtain 
	$K^{T,f}_{\delta}(x)=\min\{K^T(w):\, \abs{f_{\mathrm{std}}(M(w))-x}<\delta\}$.
	Substitute $u=M(w)$. Since $M:\Sigma^{\len{w}}\to\Sigma^{\len{w}}$ is bijective for each length, this is equivalent to
	$K^{T,f}_{\delta}(x)=\min\{K^T(M^{-1}(u)):\, \abs{f_{\mathrm{std}}(u)-x}<\delta\}$.
	For every $u\in\Sigma^*$,
	$K^T(M^{-1}(u))=\min\{\len{\pi}:\ T(\pi)=M^{-1}(u)\}
	=\min\{\len{\pi}:\ M(T(\pi))=u\}=K^{M\circ T}(u)$,
	and substituting yields the claim.
\end{proof}

The next proposition formalizes the key robustness property of finite-state coherence: composing the naming map with an invertible synchronous Mealy relabeling does not change the relativized finite-state approximation complexity, and hence does not change the induced $f$-dimension.

\begin{proposition}\label{prop:fscoherent-invariant}
	If $f$ is finite-state coherent, then for every $x\in[0,1)$,
	$
	\dim_{\mathrm{FS}}^{f}(x)=\dim_{\mathrm{FS}}^{f_{\mathrm{std}}}(x).
	$
	In particular, $x$ is $f$-normal if and only if $x$ is $f_{\mathrm{std}}$-normal.
\end{proposition}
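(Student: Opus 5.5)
The plan is to prove the two inequalities $\dim_{\mathrm{FS}}^{f}(x)\ge\dim_{\mathrm{FS}}^{f_{\mathrm{std}}}(x)$ and $\dim_{\mathrm{FS}}^{f}(x)\le\dim_{\mathrm{FS}}^{f_{\mathrm{std}}}(x)$ separately. Each direction will transport transducers through the Mealy machine $M$ (respectively $M^{-1}$) and use Lemma~\ref{lem:fscoherent-relabel} to compare approximation complexities pointwise in $\delta$.

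\emph{First direction.} Fix any $\Sigma$-FST $T$. The output composition $M\circ T$ is again a $\Sigma$-FST: its states are pairs $(p,q)$ with $p$ a state of $T$ and $q$ a state of $M$. On input symbol $a$ in state $(p,q)$, it feeds the word $\nu_T(p,a)$ through $M$ starting from $q$, outputs the resulting word, and moves to $(\delta_T(p,a),\delta_M(q,\nu_T(p,a)))$. This is finite, and by induction on input length it computes $\pi\mapsto M(T(\pi))$; here we use that $M$ is synchronous and processes its input letter by letter, so applying it block by block agrees with applying it to the concatenation. By Lemma~\ref{lem:fscoherent-relabel}, $K^{T,f}_\delta(x)=K^{M\circ T,f_{\mathrm{std}}}_\delta(x)$ for every $\delta$. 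Hence
\[
\liminf_{\delta\to0^+}\frac{K^{T,f}_\delta(x)}{\log_k(1/\delta)}\ \ge\ \dim^{f_{\mathrm{std}}}_{\mathrm{FS}}(x),
\]
and taking the infimum over $T$ gives $\dim^f_{\mathrm{FS}}(x)\ge\dim^{f_{\mathrm{std}}}_{\mathrm{FS}}(x)$.

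\emph{Second direction.} Fix an FST $S$ and set $T:=M^{-1}\circ S$, which is an FST by the same product construction, now using the Mealy machine $M^{-1}$ from Lemma~\ref{lem:mealy-basic}(2). Then $M\circ T$ computes $\pi\mapsto M(M^{-1}(S(\pi)))$. To conclude $M\circ T=S$ as functions, we need $M(M^{-1}(u))=u$, not only $M^{-1}(M(w))=w$. This follows because $M$ is a bijection on each $\Sigma^n$ (Lemma~\ref{lem:mealy-basic}(1)), so a left inverse on $\Sigma^n$ is also a right inverse. Therefore $K^{M\circ T}=K^S$, and Lemma~\ref{lem:fscoherent-relabel} gives $K^{T,f}_\delta(x)=K^{S,f_{\mathrm{std}}}_\delta(x)$. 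Taking $\liminf$ and then the infimum over $S$ yields $\dim^f_{\mathrm{FS}}(x)\le\dim^{f_{\mathrm{std}}}_{\mathrm{FS}}(x)$.

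The main technical point I expect is verifying that output-composition of an FST with a synchronous Mealy machine is itself an FST computing exactly the composed map. This is needed because the definition of $\nu$ allows arbitrary output words per step, so the state of $M$ after each block must be carried along. A minor subtlety is the empty word, since $f(\lambda)=0=f_{\mathrm{std}}(\lambda)$ and $M(\lambda)=\lambda$; Lemma~\ref{lem:fscoherent-relabel} already covers it. The normality statement then follows immediately, since both dimensions are equal and $f$-normality is defined by the dimension being $1$.
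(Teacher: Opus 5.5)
Your proof is correct and follows the paper's argument: both transport transducers via $T\mapsto M\circ T$, apply Lemma~\ref{lem:fscoherent-relabel} pointwise in $\delta$, and compare the two infima. Your version is slightly more careful than the paper's one-line claim that ``$T\mapsto M\circ T$ is a bijection on FSTs''. You check that the output composition is again an FST, and that $M\circ(M^{-1}\circ S)$ computes the same map as $S$, upgrading the left inverse to a two-sided one via levelwise bijectivity; this is exactly what the infimum comparison needs.
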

\begin{proof}
	Let $f$ be finite-state coherent via $M$. By Lemma~\ref{lem:fscoherent-relabel}, for every $T$ and every $\delta$,
	$K^{T,f}_{\delta}(x)=K^{M\circ T,f_{\mathrm{std}}}_{\delta}(x)$, hence the corresponding $\liminf$ ratios coincide.
	Taking $\inf_T$ over all FSTs on the left equals taking $\inf_S$ over all FSTs on the right, because
	$T\mapsto M\circ T$ is a bijection on FSTs (with inverse $S\mapsto M^{-1}\circ S$). Therefore the two infima coincide.
\end{proof}

The next theorem states that, for the \emph{standard} base-$k$ naming map $f_{\mathrm{std}}$, the paper's notion of
$f$-normality (i.e.\ $\dim_{\mathrm{FS}}^{f}(x)=1$) coincides exactly with the classical notion of base-$k$ normality of $x$.

\begin{theorem}\label{thm:std-normal-known}
	For $x\in[0,1)$, one has $\dim_{\mathrm{FS}}^{f_{\mathrm{std}}}(x)=1$ if and only if $x$ is base-$k$ normal.
\end{theorem}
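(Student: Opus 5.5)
The plan is to reduce $\dim_{\mathrm{FS}}^{f_{\mathrm{std}}}(x)$ to the classical finite-state dimension of the digit sequence $X:=\seq_k(x)$. We then invoke the Doty--Moser transducer characterization $\dim_{\mathrm{FS}}(X)=\inf_T\liminf_n K^T(X\prefix n)/n$, as in Lemma~\ref{lem:normal-dim1}, together with the Bourke--Hitchcock--Vinodchandran theorem that $X$ is normal iff $\dim_{\mathrm{FS}}(X)=1$. So it suffices to prove $\dim^{f_{\mathrm{std}}}_{\mathrm{FS}}(x)=\dim_{\mathrm{FS}}(X)$. It is in fact enough to show $\dim^{f_{\mathrm{std}}}_{\mathrm{FS}}(x)\ge \dim_{\mathrm{FS}}(X)$ together with the easy reverse inequality.

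\textbf{Upper bound ($\le$).} Take $w=X\prefix n$. Then $f_{\mathrm{std}}(w)=\lfloor k^nx\rfloor/k^n$ satisfies $|f_{\mathrm{std}}(w)-x|<k^{-n}$. Hence $K^{T,f_{\mathrm{std}}}_{k^{-n}}(x)\le K^T(X\prefix n)$. Now choose a subsequence $n_j$ realizing $\liminf_n K^T(X\prefix n)/n$. Using monotonicity in $\delta$ and $\log_k(1/k^{-n})=n$, we get $\liminf_{\delta}K^{T,f_{\mathrm{std}}}_\delta(x)/\log_k(1/\delta)\le\liminf_n K^T(X\prefix n)/n$. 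Taking $\inf_T$ gives the bound.

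\textbf{Lower bound ($\ge$).} This is the main obstacle. A name $w$ with $|\mathrm{grid}(w)-x|<\delta$ need not be a prefix of $X$. It can be a longer word, or a nearby grid point whose digits differ from $X$ because of a carry, e.g.\ $0.0999\ldots$ versus $0.1000$. The plan is as follows. Fix $T$ and $\delta\in[k^{-n-1},k^{-n})$, and let $\pi$ be a shortest input with $|\mathrm{grid}(T(\pi))-x|<\delta$. Since $\mathrm{grid}(T(\pi))$ is within $k^{-n}$ of $x$, its first $n$ digits (padded with zeros if $|T(\pi)|<n$) equal either $X\prefix n$ or the $k$-adic neighbour $X\prefix n\pm 1$. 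Such a neighbour has the form $v\,a\,(k-1)^s$ versus $v\,(a+1)\,0^s$.

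We then build a finite family of transducers $T'$ from $T$. Each $T'$ simulates $T$ and, after the output reaches length $n$ (detected nondeterministically, or cleanly via a bounded extra marker input), pads or truncates. It also applies the carry correction: it buffers runs of $(k-1)$'s or $0$'s, which requires only constant memory per symbol and no unbounded lookahead if the corrected symbol is emitted when the run ends. This yields $K^{T'}(X\prefix n)\le K^T(T(\pi))+O(1)$. Truncation is the delicate point, since an FST cannot stop outputting. We handle it by noting that the Doty--Moser characterization concerns outputs that have $X\prefix n$ as a prefix, or by invoking the compression-based formulation that tolerates $O(\log n)$ or $o(n)$ extra input, namely an end-marker encoding of the truncation point. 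If the carry run is long, i.e.\ $s$ is unbounded, we instead use the fact that for normal, or dimension-$d$, $X$ such runs have length $o(n)$, and encode them within $o(n)$ overhead.

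Combining the two bounds gives $\inf_T\liminf_\delta K^{T,f_{\mathrm{std}}}_\delta(x)/\log_k(1/\delta)\ge\dim_{\mathrm{FS}}(X)$, since $\log_k(1/\delta)=n+O(1)$. Together with the upper bound this yields equality, and therefore $\dim^{f_{\mathrm{std}}}_{\mathrm{FS}}(x)=1$ iff $X$ is normal, i.e.\ iff $x$ is base-$k$ normal. If the $o(n)$ overhead in the carry/truncation step proves hard to control for general $d$, it suffices to prove the lower bound only in the case $\dim_{\mathrm{FS}}(X)=1$. That case is all the theorem requires, and it can alternatively be argued via Mayordomo's own identification of $\dim^{f_{\mathrm{std}}}_{\mathrm{FS}}$ with classical finite-state dimension in \cite{Mayordomo2025}.
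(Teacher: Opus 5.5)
Your top-level reduction matches the paper's. The paper proves this theorem in two lines: it cites Theorem~3.3 of \cite{Mayordomo2025} for $\dim^{f_{\mathrm{std}}}_{\mathrm{FS}}(x)=\dim_{\mathrm{FS}}(\seq_k(x))$ and then \cite{Bourke2005}. Your closing fallback sentence is therefore exactly the paper's proof.

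Your self-contained upper bound is correct. On its own it already gives the direction ``$\dim^{f_{\mathrm{std}}}_{\mathrm{FS}}(x)=1\Rightarrow x$ is normal''.

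Your self-contained lower bound has a genuine gap: the transducer $T'$ cannot be built as described.
\begin{enumerate}
\item The FSTs here are deterministic and read exactly the alphabet $\Sigma$. So there is no spare marker symbol and no nondeterministic guessing.
\item No fixed finite-state machine can detect that its output has reached length $n$ when $n$ is unbounded.
\item Simulating a marker over $\Sigma$ forces an escape encoding of $T$'s input. That encoding has an overhead that can be proportional to $|\pi|$, and you never bound it.
\end{enumerate}
Hence neither $K^{T'}(X\prefix n)\le|\pi|+O(1)$ nor the ``encode the run within $o(n)$ overhead'' step is justified. A minor point: the paper's $K^T$ requires the output to equal the target exactly, not merely to extend it. The two versions differ only by the bounded output length per input symbol, but this needs to be said.

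For the normal case, which as you note is all the theorem needs, no new transducer is required. Your own carry analysis already gives the key fact. If $|\mathrm{grid}(T(\pi))-x|<\delta\le k^{-n}$, then $T(\pi)$ and $X$ share a common prefix of length $\ell\ge n-1-r_n$. Here $r_n$ is the length of the longest block of a single repeated digit in $X$ ending at position $n$ or $n-1$. This also covers $|T(\pi)|<n$, since the padding zeros or the carried run must then be such a block of $X$.

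Normality gives $r_n=o(n)$. A constant block of length $r$ ending at position $n$ freezes the count of every other digit. That contradicts limiting frequency $1/k$ if $r\ge\varepsilon n$ infinitely often.

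Now use that $T$ emits at most $c$ symbols per input symbol. Take the longest prefix $\pi'$ of $\pi$ such that $T(\pi')$ is a prefix of $X$. Then $T(\pi')=X\prefix m$ with $m\ge\ell-c=n-o(n)$. By the argument of Lemma~\ref{lem:normal-dim1} applied to $X$, for large $n$ we get $K^{T,f_{\mathrm{std}}}_\delta(x)\ge K^T(X\prefix m)\ge(1-\varepsilon)m$. So the $\liminf$ is at least $1$ for the same $T$.

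With this repair you have a self-contained proof of the theorem as stated. The paper's citation buys more: the equality $\dim^{f_{\mathrm{std}}}_{\mathrm{FS}}(x)=\dim_{\mathrm{FS}}(\seq_k(x))$ for every $x$. That includes sequences with long carry runs, which are exactly what your transducer surgery was struggling with.
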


\begin{proof}
	By Theorem~3.3 of~\cite{Mayordomo2025}, for every $x\in[0,1)$,
	$
	\dim_{\mathrm{FS}}^{f_{\mathrm{std}}}(x)=\dim_{\mathrm{FS}}(\seq_k(x)).
	$
	By the standard characterization of normality via finite-state dimension (e.g.\ \cite{Bourke2005}), one has
	$\dim_{\mathrm{FS}}(\seq_k(x))=1$ if and only if $\seq_k(x)$ is base-$k$ normal. Finally, by definition, $\seq_k(x)$ is base-$k$ normal if and only if $x$ is base-$k$ normal. Combining these equivalences yields the claim.
\end{proof}

It is straightforward to verify that $k$-adic equidistribution of the sequence $(\lfloor k^n x\rfloor)_{n \ge 1}$ coincides with base-$k$ normality \cite{KuipersNiederreiterUniform}.

\begin{theorem}\label{thm:kadic-known}
	Let $x\in[0,1)$ and set $b_n=\lfloor k^n x\rfloor$. Then $x$ is base-$k$ normal if and only if $(b_n)_{n\ge 1}$ is
	$k$-adically equidistributed.
\end{theorem}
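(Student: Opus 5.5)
The plan is to translate residues of $b_n=\lfloor k^n x\rfloor$ modulo $k^m$ directly into length-$m$ blocks of the base-$k$ expansion. Write $\seq_k(x)=X=x_1x_2x_3\cdots$, the canonical expansion that does not end in $(k-1)^\infty$. Then $\lfloor k^n x\rfloor=\mathrm{val}(x_1\cdots x_n)$ for every $n$; the canonical choice is exactly what makes this floor identity hold. Reducing modulo $k^m$ keeps only the last $m$ digits, so for $n\ge m$
\[
b_n \bmod k^m=\mathrm{val}(x_{n-m+1}\cdots x_n).
\]
Hence, for a residue $r$ with base-$k$ numeral $u\in\Sigma^m$ (padded with leading zeros), the condition $b_n\equiv r\pmod{k^m}$ says exactly that the length-$m$ block of $X$ ending at position $n$ equals $u$. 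The map $r\mapsto u$ is a bijection $\{0,\dots,k^m-1\}\to\Sigma^m$.

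Next I count occurrences. For $N\ge m$,
\[
\#\{1\le n\le N: b_n\equiv r\ (\mathrm{mod}\ k^m)\}=\#\{m\le n\le N: X[n-m+1..n]=u\}+O(m),
\]
where the $O(m)$ term absorbs the at most $m-1$ indices $n<m$. The main term is precisely the number of occurrences of $u$ in $x_1\cdots x_N$ at all (sliding, overlapping) positions. Dividing by $N$, the error $O(m)/N$ tends to $0$ for each fixed $m$. So $k$-adic equidistribution of $(b_n)$ is equivalent to the statement that, for every $m$ and every $u\in\Sigma^m$, the sliding-window frequency of $u$ in $X$ equals $k^{-m}$.

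That statement is the standard (Borel) definition of base-$k$ normality of $X$. By definition this is the same as base-$k$ normality of $x$, which gives both directions at once.

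The only real obstacle is bookkeeping. First, I must check the floor identity at $k$-adic rationals; the choice of expansion not ending in $(k-1)^\infty$ handles this, and such $x$ are neither normal nor equidistributed anyway. Second, if the paper's working definition of normality were the aligned-block one, I would appeal to the classical equivalence of aligned and sliding block normality (as in \cite{KuipersNiederreiterUniform}) rather than reprove it.
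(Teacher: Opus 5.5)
Your proof is correct, but it takes a different route from the paper. The paper gives no argument. It only says the equivalence is straightforward and cites Kuipers--Niederreiter, so it implicitly relies on the classical theorem (Wall) that $x$ is base-$k$ normal iff $(k^n x)$ is uniformly distributed modulo $1$.

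That route still needs a translation step. For $n\ge m$ one has $b_n \bmod k^m=\lfloor k^m\{k^{n-m}x\}\rfloor$, so $b_n\equiv r\pmod{k^m}$ iff $\{k^{n-m}x\}\in[r/k^m,(r+1)/k^m)$. Uniform distribution modulo $1$ then gives $k$-adic equidistribution at once. The converse needs an approximation of arbitrary subintervals of $[0,1)$ by finite unions of $k$-adic intervals before Wall's theorem can be applied.

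Your argument bypasses uniform distribution modulo $1$ entirely. It uses two facts:
\begin{itemize}
\item the floor identity $\lfloor k^n x\rfloor=\mathrm{val}(x_1\cdots x_n)$, which holds for every $x\in[0,1)$ because of the canonical expansion, so $k$-adic rationals need no separate case;
\item reduction modulo $k^m$ keeps exactly the last $m$ digits.
\end{itemize}
Together these turn residue counts into sliding-window block counts up to an $O(m)$ error. The result is exactly the sliding-block definition of normality.

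This makes your proof self-contained and elementary. It also lands on the same notion of normality that underlies the finite-state-dimension characterization invoked in Theorem~\ref{thm:std-normal-known}. Its only external dependency is the one you flag: the classical aligned/sliding equivalence if normality is defined by aligned blocks, or Wall's theorem if it is defined via uniform distribution modulo $1$.
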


Finally we combine finite-state coherent invariance with the explicit form of $a_n^f(x)$ to obtain the
equidistribution characterization in terms of the scaled approximation sequence.

\begin{theorem}\label{thm:fscoherent-eqchar}
	Let $f$ be a finite-state coherent separator enumerator over $\Sigma$ and let $x\in[0,1)$.
	Then $x$ is $f$-normal if and only if the integer sequence $(k^n a_n^f(x))_{n\ge 1}$ is $k$-adically equidistributed.
	
\end{theorem}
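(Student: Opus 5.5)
The proof is a chain of equivalences built from results already stated, so it needs no new estimates. Fix a finite-state coherent $f$, realized by an invertible synchronous Mealy machine $M$ as in Definition~\ref{def:fs-coherent}, and fix $x\in[0,1)$.

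First, rewrite the scaled approximation sequence. By Lemma~\ref{lem:fscoherent-trunc}, for every $n\ge 1$ we have $k^n a_n^f(x)=\lfloor k^n x\rfloor$. Hence the sequence $(k^n a_n^f(x))_{n\ge1}$ is literally the sequence $(b_n)_{n\ge1}$ with $b_n=\lfloor k^nx\rfloor$. In particular, it does not depend on $M$ at all.

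Next, chain the equivalences:
\[
x \text{ is } f\text{-normal}
\iff \dim_{\mathrm{FS}}^{f_{\mathrm{std}}}(x)=1
\iff x \text{ is base-}k\text{ normal}
\iff (\lfloor k^n x\rfloor)_{n\ge1}\text{ is }k\text{-adically equidistributed}.
\]
The first step is Proposition~\ref{prop:fscoherent-invariant}, since $\dim^f_{\mathrm{FS}}(x)=\dim^{f_{\mathrm{std}}}_{\mathrm{FS}}(x)$. The second is Theorem~\ref{thm:std-normal-known}. The third is Theorem~\ref{thm:kadic-known}. Substituting the identity from the first step into the last condition gives the claimed equivalence.

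The main point needing care is the bookkeeping of edge conventions rather than any new estimate. Lemma~\ref{lem:fscoherent-trunc} is stated for $n\ge1$, which matches the indexing $n\ge1$ in Definition~\ref{def:kadic}, so the value $f(\lambda)=0$ plays no role. Likewise, the canonical expansion $\seq_k(x)$ (avoiding a tail of $(k-1)^\infty$) is the one whose digits are read off by $\lfloor k^n x\rfloor$, so Theorem~\ref{thm:kadic-known} applies exactly to the sequence produced by Lemma~\ref{lem:fscoherent-trunc}.

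If one wanted Theorem~\ref{thm:kadic-known} to be self-contained, it can be proved directly. Note that $b_n \bmod k^m$ encodes the digit block $x_{n-m+1}\cdots x_n$ of $\seq_k(x)$ for $n\ge m$. So the residue frequencies modulo $k^m$ are exactly the frequencies of length-$m$ blocks in $\seq_k(x)$. The finitely many indices $n<m$ do not affect the limits, and uniform block frequencies for every $m$ is the definition of base-$k$ normality.
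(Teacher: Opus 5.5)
Your proof is correct and is essentially the paper's own argument. The paper likewise chains Proposition~\ref{prop:fscoherent-invariant}, Theorem~\ref{thm:std-normal-known} and Theorem~\ref{thm:kadic-known}, then substitutes $k^n a_n^f(x)=\lfloor k^n x\rfloor$ from Lemma~\ref{lem:fscoherent-trunc}; your extra sketch of Theorem~\ref{thm:kadic-known}, that residues of $\lfloor k^n x\rfloor$ modulo $k^m$ are the values of the length-$m$ blocks of $\seq_k(x)$ ending at position $n$, is also correct and fills in a step the paper only cites.
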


\begin{proof}
	Let $b_n(x):=k^n a_n^f(x)$. Since $f$ is finite-state coherent, Proposition~\ref{prop:fscoherent-invariant} gives
	$x$ is $f$-normal if and only if $x$ is $f_{\mathrm{std}}$-normal. By Theorem~\ref{thm:std-normal-known},
	$f_{\mathrm{std}}$-normality is equivalent to base-$k$ normality of $x$. On the other hand,
	Theorem~\ref{thm:kadic-known} states that base-$k$ normality of $x$ is equivalent to $k$-adic equidistribution of the
	integer sequence $\bigl(\lfloor k^n x\rfloor\bigr)_{n\ge 1}$. Finally, Lemma~\ref{lem:fscoherent-trunc} identifies the
	scaled approximation sequence for finite-state coherent $f$ with this canonical sequence, namely
	$b_n(x)=\lfloor k^n x\rfloor$ for all $n\ge 1$. Substituting this identity into the previous equivalence yields
	$x$ is $f$-normal if and only if $(b_n(x))_{n\ge 1}$ is $k$-adically equidistributed, which is exactly the claim.
\end{proof}

\subsection{Beyond finite-state coherence: a nearly linear time computable counterexample}
\label{subsec:nlt-border}

This subsection shows that once one moves beyond finite-state coherent enumerators, a \emph{$k$-adic equidistribution}-based
criterion for $f$-normality can fail even under very low resource bounds. Concretely, we construct a nearly linear time computable separator
enumerator $f$ and a point $x\in[0,1)$ such that
$b_n(x):=k^n a_n^f(x)$ is $k$-adically equidistributed, yet $\dim_{\mathrm{FS}}^{f}(x)=0$.

Towards defining the enumerator, fix $\Sigma=\{0,1,\dots,k-1\}$ with $k=\len{\Sigma}\ge 2$ and set $x:=1/k$. For $n\ge 1$ define
$J_n:=k^{n-1}-n$ and $c_n:=J_n/k^n=1/k-n/k^n$. Then $0\le c_n<x$ for all $n$ and $(c_n)$ is nondecreasing since
$c_{n+1}-c_n=((k-1)n-1)/k^{n+1}\ge 0$.

\smallskip
\noindent\textbf{Definition of the enumerator.}
Define $f:\Sigma^*\to[0,1)$ as follows, where $n=\len{w}$.
\[
f(w):=
\begin{cases}
	c_n, & \text{if } w=0^n,\\[0.6ex]
	\frac{\mathrm{val}(w)}{k^n}, & \text{if } w \text{ starts with }0,\ w\neq 0^n,\ \mathrm{val}(w)\le J_n,\\[0.6ex]
	0, & \text{if } w \text{ starts with }0,\ w\neq 0^n,\ \mathrm{val}(w)> J_n,\\[0.6ex]
	x+\frac{1}{k^{n+1}}, & \text{if } w=1\,0^{n-1},\\[0.6ex]
	\frac{\mathrm{val}(w)}{k^n}, & \text{otherwise.}
\end{cases}
\]
In particular, clauses with leading digit $0$ satisfy $f(w)\le c_n<x$, while clauses with leading digit $\neq 0$ satisfy
$f(w)>x$.

\begin{lemma}\label{lem:pushdown-se}
	The function $f$ is a rational-valued separator enumerator, and $f$ is $O(n \log n)$ time computable.
\end{lemma}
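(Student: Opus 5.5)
The plan has three parts: show every value lies in $[0,1)$ and is rational, show the image is dense, and bound the running time. Rationality is immediate, since every clause outputs either $c_n=J_n/k^n$, $\mathrm{val}(w)/k^n$, $0$, or $1/k+k^{-(n+1)}$. For the range I would check the clauses one at a time. We have $c_n\ge 0$ because $J_n=k^{n-1}-n\ge 0$ for $k\ge 2$, $n\ge1$, and $c_n<1/k<1$. Values $\mathrm{val}(w)/k^n$ always lie in $[0,1)$. The value $x+k^{-(n+1)}\le 1/k+1/k^2<1$. The empty word needs a convention: the clause $w=0^n$ with $n=0$ gives $c_0$, which is not defined by the formula, so I would read the definition as $f(\lambda):=0$, matching the convention used for $f_{\mathrm{std}}$ and for Definition~\ref{def:fs-coherent}.

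For density, the main observation is that on $[x,1)$ the map agrees with the standard grid except on the single word $1\,0^{n-1}$. On $[0,x)$, the words starting with $0$ of value $\le J_n$ realize every grid point $j/k^n$ with $1\le j\le J_n$, i.e.\ all level-$n$ grid points in $(0,c_n]$. Now take a nonempty open interval $I\subseteq[0,1)$ and split into two cases.
\begin{itemize}
\item If $I$ meets $(x,1)$, choose $n$ large so that $I\cap(x,1)$ contains a level-$n$ grid point $j/k^n$ with $j/k^n>x$. That point is not $x$ itself, so it is represented by a word with leading digit $\neq0$ other than $1\,0^{n-1}$. Hence it is $f$ of the standard word, by the ``otherwise'' clause.
\item If $I\subseteq[0,x]$, then $I$ contains a subinterval of $(0,x)$. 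Since $c_n\uparrow x$ (indeed $n/k^n\to0$), for large $n$ this subinterval meets $(0,c_n)$ in a set of length $\gg k^{-n}$. So it contains some $j/k^n$ with $1\le j\le J_n$, and that point is realized by the word of value $j$, which starts with $0$ because $j<k^{n-1}$.
\end{itemize}
Both cases give an image point in $I$, and countability is trivial, so the image is a separator.

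For the time bound, on input $w$ of length $n$ the algorithm must decide which clause applies and produce the rational as a pair (numerator, denominator) in base $k$ (or binary). The clause tests $w=0^n$ and $w=1\,0^{n-1}$, and the leading-digit check, all take $O(n)$ time. The substantive test is comparing $\mathrm{val}(w)$ with $J_n=k^{n-1}-n$. Its base-$k$ representation is $(k-1)$-digits with the base-$k$ expansion of $n$ subtracted in the last $O(\log n)$ positions. So I would write $n$ in base $k$ (cost $O(n\log n)$ at most with a counter, or $O(n)$ amortized), form $J_n$ digitwise in $O(n)$ time, and compare lexicographically in $O(n)$. The outputs are then easy to write down:
\begin{itemize}
\item the numerator is $w$ itself, $J_n$, $0$, or $k^n+1$ over $k^{n+1}$;
\item the denominator $k^n$ is written as $1\,0^n$ in base $k$.
\end{itemize}
If output is required in binary, the base conversion of an $n$-digit number is the potential bottleneck. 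The main obstacle is therefore fixing the representation convention so the claimed $O(n\log n)$ bound is honest. I would state that rationals are output in base $k$ (or use the $2^{-t}$-approximation model, where only $O(t)$ digits are needed), and charge $O(n\log n)$ to the counter arithmetic computing $n$ and $J_n$.
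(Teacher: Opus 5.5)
Your proof is correct and follows essentially the same route as the paper. Density comes from level-$n$ grid points on each side of $x=1/k$; your handling of $(x,1)$, noting that grid points strictly above $x$ never come from $1\,0^{n-1}$ and so are realized verbatim by the ``otherwise'' clause, is actually tighter than the paper's. The running time is dominated by writing $J_n$ in base $k$ and comparing it lexicographically with the tail of $w$, with outputs given as base-$k$ numerators over powers of $k$. One small slip: the low-order digits of $J_n=(k^{n-1}-1)-(n-1)$ are the digitwise $(k-1)$-complement of $n-1$, not of $n$, which does not affect the bound.
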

\begin{proof}
	For density in $[0,x)$, fix $n\ge 1$. For each integer $1\le j\le J_n$ there exists a length-$n$ word
	$w$ beginning with $0$ with $\mathrm{val}(w)=j$, and then $f(w)=j/k^n$ by clause (ii). Also $0\in\mathrm{Im}(f)$:
	for any $n$ choose a length-$n$ word $w$ beginning with $0$ with $\mathrm{val}(w)>J_n$ (e.g.\ $w=0\, (k\!-\!1)^{n-1}$),
	so clause (iii) gives $f(w)=0$. Since $c_n=J_n/k^n\uparrow x$, it follows that $\bigcup_{n\ge 1}\{j/k^n:0\le j\le J_n\}$ is dense in $[0,x)$.
	
	For density in $(x,1)$, clause (v) realizes all grid points $\mathrm{val}(w)/k^n$ with first digit nonzero,
	except that the single point $x=\mathrm{val}(1\,0^{n-1})/k^n$ is replaced by $x+1/k^{n+1}>x$ by clause (iv).
	To see that the realized points are dense in $(x,1)$, fix $\varepsilon>0$ and choose $n\ge 2$ with $k^{-n}<\varepsilon$.
	Then $w:=1\,0^{n-2}\,1$ has length $n$, begins with $1$, and satisfies
	$f(w)=\mathrm{val}(w)/k^n=x+1/k^n\in(x,x+\varepsilon)$, so every interval $(x,x+\varepsilon)$ meets $\mathrm{Im}(f)$.
	Since the grid points are dense in $(x,1)$ and we only removed/perturbed one point on each level, it follows that
	$\mathrm{Im}(f)$ meets every nonempty open interval in $(x,1)$. Thus $\mathrm{Im}(f)$ is dense in $[0,1)$, so $f$ is a separator enumerator.

	We use the log-cost RAM model (see \cite{Lutz2021}), in which registers store nonnegative integers in binary and each instruction has cost proportional to the bit-lengths of the operands and addresses used; in particular, random access to the $i$-th input symbol costs $O(\log i)$. We represent the output rational $f(w)$ as a pair $(p,m)$ meaning $p/k^m$, where $p$ is written explicitly in base $k$ as a digit string (so the output length is $O(\len{w})$). In this model, $f$ is computable in time $O(\len{w}\log \len{w})$ by a direct case analysis. One first reads the input via accesses to $b_1,\dots,b_{\len{w}+1}$ in order to determine $\len{w}$ and a constant number of flags (whether $w=0^{\len{w}}$, whether $w=1\,0^{\len{w}-1}$, and whether the first digit is $0$); this costs $\sum_{i\le \len{w}}O(\log i)=O(\len{w}\log \len{w})$, and the auxiliary bookkeeping uses only $O(\log \len{w})$-bit counters. In the easy cases, the output $(p,m)$ is produced by copying $w$ or by writing the required base-$k$ digit string for $p$ (namely, the explicit strings for $k^{\len{w}}+1$ or $J_{\len{w}}$), which takes $O(\len{w})$ digit steps and hence $O(\len{w}\log \len{w})$ time accounting for log-cost indexing/output. In the remaining case $w=0u$ with $\len{u}=\len{w}-1$, one computes the base-$k$ digit string of $J_{\len{w}}=k^{\len{w}-1}-\len{w}$ and compares it lexicographically with $u$; both tasks take $O(\len{w})$ digit steps, each incurring only $O(\log \len{w})$ overhead from log-cost addressing/counters. Thus the total running time is $O(\len{w}\log \len{w})$.
\end{proof}

We next identify the induced best-from-below approximation chain at the special point $x=1/k$.

\begin{lemma}\label{lem:pushdown-an}
	For every $n\ge 1$, one has $a_n^f(x)=c_n$. Consequently,
	$
	b_n(x)=k^n a_n^f(x)=J_n=k^{n-1}-n.
	$
\end{lemma}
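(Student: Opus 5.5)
The plan is to show two things: the value $c_n$ is attained by a word of length at most $n$ that lies below $x$, and no word of length at most $n$ has an $f$-value in $(c_n,x]$. The first is immediate, since $f(0^n)=c_n<x$ by the first clause. Hence $a_n^f(x)\ge c_n$.

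For the upper bound I fix an arbitrary word $w$ of length $m\le n$ with $f(w)\le x$ and split according to the leading digit. By the remark after the definition, every word with leading digit $\neq 0$ has $f(w)>x$. This covers clause (iv), which gives $x+k^{-(m+1)}$. It also covers clause (v): the grid point $\mathrm{val}(w)/k^m$ with first digit nonzero is at least $1/k=x$, with equality only for $w=10^{m-1}$, and that word is exactly the one diverted to clause (iv). So such words never qualify.

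If instead $w$ begins with $0$ (or $w=0^m$), then clauses (i)--(iii) give $f(w)\in\{c_m,\ \mathrm{val}(w)/k^m \text{ with } \mathrm{val}(w)\le J_m,\ 0\}$. In every case $f(w)\le J_m/k^m=c_m$. Since $(c_n)$ is nondecreasing, $c_m\le c_n$ for $m\le n$. To be safe I will recheck the monotonicity from $c_{n+1}-c_n=((k-1)n-1)/k^{n+1}\ge 0$, which holds for $n\ge1$, $k\ge2$.

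The empty word needs a convention. The definition as displayed does not cover $w=\lambda$ explicitly; the first clause with $n=0$ gives $c_0$, which is undefined. I will take $f(\lambda)\le c_1$, for instance $f(\lambda)=0$ or $c_1=1/k-1/k=0$. Either way $f(\lambda)=0\le c_n$, so it causes no trouble. This is the only delicate point, and it is a bookkeeping issue rather than a real obstacle.

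Combining the two bounds gives $a_n^f(x)=\max\{f(w):|w|\le n,\ f(w)\le x\}=c_n$. Multiplying by $k^n$ then gives $b_n(x)=J_n=k^{n-1}-n$.
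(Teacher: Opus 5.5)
Your proof is correct and follows the same route as the paper's: the witness $0^n$ gives $a_n^f(x)\ge c_n$, and a case split on the leading digit shows that every other admissible value is at most $c_n$. You are more careful than the paper, whose proof examines only words of length exactly $n$ and never mentions the empty word. Your use of the monotonicity of $(c_m)$ for shorter words is genuinely needed. Your convention $f(\lambda)=0$ is also more than bookkeeping: extending the formula literally would give $c_0=1/k=x$, and then $a_n^f(x)=x$.
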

\begin{proof}
	Since $f(0^n)=c_n$ by clause (i) and $c_n=1/k-n/k^n<x=1/k$, we have $a_n^f(x)\ge c_n$. Now let $w\in\Sigma^n$ with $f(w)<x$. If $w$ begins with a nonzero digit, then $f(w)>x$ by clause (iv) or (v), a contradiction; hence $w$ begins with $0$. If $w=0^n$ then $f(w)=c_n$. If $w\neq 0^n$, then by clause (ii) or (iii) either $f(w)=\mathrm{val}(w)/k^n$ with $\mathrm{val}(w)\le J_n$, or $f(w)=0$; in the first case
	\[
	f(w)=\frac{\mathrm{val}(w)}{k^n}\le \frac{J_n}{k^n}=c_n,
	\]
	and in the second case $f(w)=0\le c_n$. Thus every $w\in\Sigma^n$ with $f(w)<x$ satisfies $f(w)\le c_n$, so $a_n^f(x)\le c_n$. Therefore $a_n^f(x)=c_n$, and multiplying by $k^n$ yields $b_n(x)=k^n c_n=J_n=k^{n-1}-n$.
\end{proof}

We now verify that the sequence $(b_n(x))_{n\ge 1}$ is $k$-adically equidistributed.

\begin{lemma}\label{lem:pushdown-kadic}
	The integer sequence $(b_n(x))_{n\ge 1}$ is $k$-adically equidistributed.
\end{lemma}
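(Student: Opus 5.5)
The plan is to work directly from the closed form in Lemma~\ref{lem:pushdown-an}, namely $b_n(x)=k^{n-1}-n$, and reduce it modulo $k^m$ for a fixed $m\ge 1$. For all $n\ge m+1$ we have $k^{n-1}\equiv 0\pmod{k^m}$. Hence $b_n(x)\equiv -n\pmod{k^m}$ for all but the finitely many indices $n\le m$. Those finitely many exceptional terms contribute at most $m/N$ to every counting frequency, which tends to $0$, so they can be discarded.

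It then remains to count, for a fixed residue $r\in\{0,\dots,k^m-1\}$, the number of $n\le N$ with $-n\equiv r\pmod{k^m}$, that is, $n\equiv -r\pmod{k^m}$. The map $n\mapsto -n \bmod k^m$ sends each block of $k^m$ consecutive integers bijectively onto $\mathbb{Z}/k^m\mathbb{Z}$. So among $1\le n\le N$, each residue class is hit either $\lfloor N/k^m\rfloor$ or $\lceil N/k^m\rceil$ times. Combining this with the $O(m)$ exceptional indices gives
\[
\Bigl|\#\{1\le n\le N:\ b_n(x)\equiv r\pmod{k^m}\}-\frac{N}{k^m}\Bigr|\le m+1 .
\]
Dividing by $N$ and letting $N\to\infty$ yields the limit $1/k^m$ for every $m$ and every $r$, which is exactly Definition~\ref{def:kadic}.

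There is no real obstacle here. The only point needing care is the bookkeeping for the initial segment $n\le m$, where $k^{n-1}$ is not yet divisible by $k^m$, together with the floor/ceiling error in the block count. Both are bounded independently of $N$ and vanish after normalization.
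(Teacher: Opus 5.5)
Your proof is correct and follows the paper's argument. Like the paper, you reduce the closed form $b_n(x)=k^{n-1}-n$ from Lemma~\ref{lem:pushdown-an} modulo $k^m$ to get $b_n(x)\equiv -n$ for $n\ge m+1$, note that $-n \bmod k^m$ cycles uniformly, and discard the finitely many initial terms; your explicit error bound of $m+1$ is just a quantitative version of the paper's ``asymptotically uniform'' remark.
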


\begin{proof}
	Fix $m\ge 1$. For all $n\ge m+1$ one has $k^m\mid k^{n-1}$, hence
	$b_n(x)=k^{n-1}-n\equiv -n \pmod{k^m}$ for all $n\ge m+1$. As $n$ ranges over $\{1,2,\dots,N\}$,
	the residues of $-n\bmod k^m$ are asymptotically uniform, and omitting the finite initial segment $n\le m$
	does not affect limiting frequencies. Therefore, for every $r\in\{0,1,\dots,k^m-1\}$,
	$\lim_{N\to\infty}\frac{1}{N}\#\{1\le n\le N:\ b_n(x)\equiv r\!\!\!\pmod{k^m}\}=\frac{1}{k^m}$.
\end{proof}

Finally, despite this equidistribution property, the point $x$ has vanishing relativized finite-state dimension with respect to $f$.

\begin{proposition}\label{prop:pushdown-dim0}
	For the above $f$ and $x=1/k$, one has $\dim_{\mathrm{FS}}^{f}(x)=0$.
\end{proposition}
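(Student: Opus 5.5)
The plan is to mirror Part~1 of the proof of Theorem~\ref{thm:negative}. The words $0^n$ are named by $f$ as $c_n=1/k-n/k^n$, which lies within distance $n/k^n$ of $x=1/k$, and these words are trivially compressible by a one-state transducer. So for each integer $L\ge 1$ we take $T_L$ to be the one-state $\Sigma$-FST that outputs $0^L$ on every input symbol. Then $K^{T_L}(0^n)\le\lceil n/L\rceil$, because feeding $\lceil n/L\rceil$ symbols produces $0^{L\lceil n/L\rceil}$. Strictly we need output exactly $0^n$. To handle this, we either restrict to $n$ divisible by $L$, or we use a small finite-state variant that, after the blocks, emits a final shorter block selected by the last input symbol (possible once $L\le k$ options are encoded by states). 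The simplest option is to work only along the subsequence $n\in L\N$.

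Next we set the scales. We take $\delta_n:=2n/k^n$, so that $\abs{f(0^n)-x}=n/k^n<\delta_n$. This gives $K^{T_L,f}_{\delta_n}(x)\le n/L$ for $L\mid n$. Since $\log_k(1/\delta_n)=n-\log_k(2n)=n-O(\log n)$, the ratio along this subsequence is at most $(n/L)/(n-O(\log n))\to 1/L$.

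To pass from the subsequence to the full $\liminf_{\delta\to0^+}$, we use that $K^{T,f}_\delta(x)$ is non-increasing in $\delta$, exactly as in Theorem~\ref{thm:negative}. In fact the definition of $\dim^f_{\mathrm{FS}}$ uses a $\liminf$, so exhibiting a sequence $\delta\to0$ along which the ratio tends to $1/L$ already suffices. This sidesteps the interpolation argument altogether.

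Combining these steps, $\dim^f_{\mathrm{FS}}(x)\le 1/L$ for every $L$, and dimension is nonnegative, so $\dim^f_{\mathrm{FS}}(x)=0$. The only mildly delicate point is the $O(\log n)$ loss in $\log_k(1/\delta_n)$ caused by $c_n$ approaching $x$ at rate $n/k^n$ rather than $k^{-n}$. This loss is sublinear and therefore harmless. I do not expect any step to be a genuine obstacle.
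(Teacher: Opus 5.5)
Your proposal is correct and follows the paper's proof essentially step for step. Both use the one-state transducers $T_L$ emitting $0^L$, the names $f(0^n)=c_n$ at distance $n/k^n$ from $x$, the scales $\delta_n=2n/k^n$ with $\log_k(1/\delta_n)=n-\log_k(2n)$, and a $\liminf$ taken along this sequence of scales, which (as you note) already bounds $\liminf_{\delta\to0^+}$. Your restriction to $L\mid n$ is slightly more careful than the paper: the paper's bound $K^{T_L}(0^n)\le\lceil n/L\rceil$ is false when $L\nmid n$, since $0^n$ is then not in the range of $T_L$, although its conclusion $K^{T_L,f}_{\delta_n}(x)\le\lceil n/L\rceil$ still holds by using the word $0^{L\lceil n/L\rceil}$ instead.
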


\begin{proof}
	For each $L\ge 1$ let $T_L$ be the $1$-state $\Sigma$-FST that outputs $0^L$ on every input symbol. Then
	$T_L(\pi)=0^{L\len{\pi}}$, so $K^{T_L}(0^n)\le \lceil n/L\rceil$. Let $\delta_n:=2n/k^n$. Since $f(0^n)=c_n$ and $|x-c_n|=\frac{n}{k^n}<\delta_n$ we have
	$K^{T_L,f}_{\delta_n}(x)\le K^{T_L}(0^n)\le \lceil n/L\rceil$. Since
	$\log_k(1/\delta_n)=\log_k(k^n/(2n))=n-\log_k(2n)$, we obtain
	$$
	\liminf_{n\to\infty}\frac{K^{T_L,f}_{\delta_n}(x)}{\log_k(1/\delta_n)}
	\le
	\liminf_{n\to\infty}\frac{\lceil n/L\rceil}{n-\log_k(2n)}
	=\frac{1}{L}.
	$$
	Taking the infimum over transducers (in particular over $T_L$) and letting $L\to\infty$ yields
	$\dim_{\mathrm{FS}}^{f}(x)=0$.
\end{proof}

Taking $x=1/k$ and $f$ as defined above we obtain the following.

\begin{theorem}\label{thm:pushdown-summary}
	There exist a rational-valued $O(n \log n)$-time computable separator enumerator $f:\Sigma^*\to[0,1)$ and a point $x\in[0,1)$ such that the integer sequence $(k^n a_n^f(x))_{n\ge 1}$ is $k$-adically equidistributed,
	while $\dim_{\mathrm{FS}}^{f}(x)=0$.
\end{theorem}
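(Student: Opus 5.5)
The theorem is a packaging of the results just established for the specific enumerator $f$ of Subsection~\ref{subsec:nlt-border} and the point $x=1/k$. I would prove it by citing those results in turn, checking only that each hypothesis of the theorem is covered.

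\textbf{Step 1.} Lemma~\ref{lem:pushdown-se} gives that $f$ is a rational-valued separator enumerator. The same lemma gives that $f$ is computable in time $O(n\log n)$ in the log-cost RAM model. This settles the resource and representation requirements.

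\textbf{Step 2.} Lemma~\ref{lem:pushdown-an} identifies the best-from-below chain: $a_n^f(x)=c_n$, so $k^n a_n^f(x)=J_n=k^{n-1}-n$. These values are integers, so the $k$-adic notion of Definition~\ref{def:kadic} applies.

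\textbf{Step 3.} Lemma~\ref{lem:pushdown-kadic} then gives that $(k^n a_n^f(x))_{n\ge1}$ is $k$-adically equidistributed. The reason is that $J_n\equiv -n \pmod{k^m}$ for all $n\ge m+1$, so the residues cycle uniformly.

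\textbf{Step 4.} Proposition~\ref{prop:pushdown-dim0} gives $\dim^f_{\mathrm{FS}}(x)=0$. The witness is the word $0^n$, which is highly compressible and names $c_n$ within distance $n/k^n$ of $x$.

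One technical point needs care in the dimension bound, since the theorem's conclusion rests on it. The bound in Proposition~\ref{prop:pushdown-dim0} is computed along the subsequence $\delta_n=2n/k^n$, whereas the definition uses $\liminf_{\delta\to0^+}$. I would note explicitly that $\delta_n\to 0$. Since a $\liminf$ over all $\delta\to0^+$ is at most its value along any sequence $\delta_n\to0$, the subsequence bound $1/L$ transfers directly to the full $\liminf$. Unlike in Theorem~\ref{thm:negative}, no monotonicity interpolation between consecutive scales is needed here.

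Combining Steps 1–4 yields the statement.
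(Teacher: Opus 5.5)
Your proposal is correct and is essentially the paper's proof, which simply chains Lemma~\ref{lem:pushdown-an}, Lemma~\ref{lem:pushdown-kadic} and Proposition~\ref{prop:pushdown-dim0}, with Lemma~\ref{lem:pushdown-se} implicit in ``$f$ as defined above''. Your subsequence remark is valid, and it also repairs a small slip in the cited proposition: $T_L$ outputs $0^n$ only when $L\mid n$ (otherwise $K^{T_L}(0^n)=\infty$), so one should either restrict to $n\in L\N$ or use the word $0^{L\lceil n/L\rceil}$, whose value $c_{L\lceil n/L\rceil}$ is at least as close to $x$ as $c_n$.
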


\begin{proof}
	Take $x=1/k$ and $f$ as defined above. Lemma~\ref{lem:pushdown-an} identifies $k^n a_n^f(x)=b_n(x)$, Lemma~\ref{lem:pushdown-kadic}
	shows $(b_n(x))$ is $k$-adically equidistributed, and Proposition~\ref{prop:pushdown-dim0} gives $\dim_{\mathrm{FS}}^{f}(x)=0$.
\end{proof}

\section{Discussion and open questions}
\label{sec:discussion}

We show that, in general, no distributional property of the scaled best-from-below approximation sequence
$(k^n a_n^f(x))_{n\ge1}$ can characterize $f$-normality uniformly over all separator enumerators (indeed, we construct computable
enumerators $f_0,f_1$ for which the associated scaled approximation sequences coincide while the corresponding
$f$-normality behavior diverges). At the same time, we identify a structured regime---finite-state coherent enumerators---in which $f$-normality is
equivalent to $k$-adic equidistribution of $(k^n a_n^f(x))_{n\ge1}$. We also show that this correspondence can already fail for an efficiently computable separator enumerator, in fact for an $O(n\log n)$-time computable $f$. This raises the natural question of how far this correspondence persists beyond finite-state coherent relabelings.
One concrete setting is when the naming map is computable by a deterministic pushdown transducer.
In particular, does there exist such a separator enumerator $f$ and a point $x\in[0,1)$ for which
the integer sequence $(k^n a_n^f(x))_{n\ge 1}$ is $k$-adically equidistributed while $\dim_{\mathrm{FS}}^{f}(x)<1$? It would also be interesting to identify the widest natural family of separator enumerators for which an equidistribution characterization of $f$-normality remains valid.

A related direction concerns the role of invertibility in the defining Mealy-machine relabeling: invertibility is used
to transport approximation complexity via the substitution $u=M(w)$ and to ensure that $T\mapsto M\circ T$ is a bijection
on finite-state transducers, yielding invariance of the induced $f$-dimension. It would be interesting to determine
whether some weaker condition (e.g.\ levelwise surjectivity or bounded-to-one behavior on each $\Sigma^n$) suffices for
the same equidistribution characterization, or whether non-invertible finite-state relabelings can already break it.

%
%
%

\bibliography{main}

@InProceedings{Mayordomo2025,
author="Mayordomo, Elvira",
editor="Beckmann, Arnold
and Oitavem, Isabel
and Manea, Florin",
title="A Point to Set Principle for Finite-State Dimension",
booktitle="Crossroads of Computability and Logic: Insights, Inspirations, and Innovations",
year="2025",
publisher="Springer Nature Switzerland",
address="Cham",
pages="299--304",
isbn="978-3-031-95908-0"
}

@misc{Doty2006,
      title={Finite-State Dimension and Lossy Decompressors}, 
      author={David Doty and Philippe Moser},
      year={2006},
      eprint={cs/0609096},
      archivePrefix={arXiv},
      primaryClass={cs.CC},
      url={https://arxiv.org/abs/cs/0609096}, 
}

@article{Bourke2005,
  title={Entropy rates and finite-state dimension},
  author={Bourke, Chris and Hitchcock, John M and Vinodchandran, NV},
  journal={Theoretical Computer Science},
  volume={349},
  number={3},
  pages={392--406},
  year={2005},
  publisher={Elsevier}
}

@article{Dai2004,
  title={Finite-state dimension},
  author={Dai, Jack J and Lathrop, James I and Lutz, Jack H and Mayordomo, Elvira},
  journal={Theoretical Computer Science},
  volume={310},
  number={1-3},
  pages={1--33},
  year={2004},
  publisher={Elsevier}
}

@article{Lutz2003a,
title = {The dimensions of individual strings and sequences},
journal = {Information and Computation},
volume = {187},
number = {1},
pages = {49-79},
year = {2003},
issn = {0890-5401},
doi = {https://doi.org/10.1016/S0890-5401(03)00187-1},
url = {https://www.sciencedirect.com/science/article/pii/S0890540103001871},
author = {Jack H. Lutz}
}

@article{Lutz2003b,
author = {Lutz, Jack H.},
title = {Dimension in Complexity Classes},
journal = {SIAM Journal on Computing},
volume = {32},
number = {5},
pages = {1236-1259},
year = {2003},
doi = {10.1137/S0097539701417723},

URL = { 
    
        https://doi.org/10.1137/S0097539701417723
    
    

},
eprint = { 
    
        https://doi.org/10.1137/S0097539701417723
    
    

}
}

@article{Lutz2018,
author = {Lutz, Jack H. and Lutz, Neil},
title = {Algorithmic Information, Plane Kakeya Sets, and Conditional Dimension},
year = {2018},
issue_date = {June 2018},
publisher = {Association for Computing Machinery},
address = {New York, NY, USA},
volume = {10},
number = {2},
issn = {1942-3454},
url = {https://doi.org/10.1145/3201783},
doi = {10.1145/3201783},
journal = {ACM Trans. Comput. Theory},
month = may,
articleno = {7},
numpages = {22}
}

@book {KuipersNiederreiterUniform,
    AUTHOR = {Kuipers, L. and Niederreiter, H.},
     TITLE = {Uniform distribution of sequences},
    SERIES = {Pure and Applied Mathematics},
 PUBLISHER = {Wiley-Interscience [John Wiley \& Sons], New
              York-London-Sydney},
      YEAR = {1974},
     PAGES = {xiv+390},
   MRCLASS = {10K05 (22D99)},
  MRNUMBER = {0419394},
MRREVIEWER = {P. Gerl},
}

@article{Calvert2025,
  title={Normality, relativization, and randomness},
  author={Calvert, Wesley and Gruner, Emma and Mayordomo, Elvira and Turetsky, Daniel and Villano, Java Darleen},
  journal={Theory of Computing Systems},
  volume={69},
  number={3},
  pages={26},
  year={2025},
  publisher={Springer}
}

@incollection{Lutz2020,
  title={Who asked us? How the theory of computing answers questions about analysis},
  author={Lutz, Jack H and Lutz, Neil},
  booktitle={Complexity and Approximation: In Memory of Ker-I Ko},
  pages={48--56},
  year={2020},
  publisher={Springer}
}

@article{Lutz2021,
title = {Computing absolutely normal numbers in nearly linear time},
journal = {Information and Computation},
volume = {281},
pages = {104746},
year = {2021},
issn = {0890-5401},
doi = {https://doi.org/10.1016/j.ic.2021.104746},
url = {https://www.sciencedirect.com/science/article/pii/S0890540121000614},
author = {Jack H. Lutz and Elvira Mayordomo}
}
\bibliographystyle{plain}

\end{document}